\newtheorem{mytheorem}{Theorem}
\newtheorem{mycorollary}{Corollary}
\newtheorem{mydefinition}{Definition}
\newcommand{\bes} {\begin{subequations}}
\newcommand{\ees} {\end{subequations}}
\def\>{\rangle}
\def\<{\langle}
\newcommand{\ketb}[2]{|{#1}\>\!\<#2|}
\begin{document}

\title{Efficient Chromatic-Number-Based Multi-Qubit Decoherence and Crosstalk Suppression}

\author{Amy F. Brown\,\orcidlink{0000-0001-6664-6494}}
\affiliation{Department of Physics \& Astronomy, University of Southern California,
Los Angeles, California 90089, USA}
\affiliation{Center for Quantum Information Science \& Technology, University of
Southern California, Los Angeles, California 90089, USA}
\author{Daniel A. Lidar\,\orcidlink{0000-0002-1671-1515}}
\affiliation{Department of Physics \& Astronomy, University of Southern California,
Los Angeles, California 90089, USA}
\affiliation{Center for Quantum Information Science \& Technology, University of
Southern California, Los Angeles, California 90089, USA}
\affiliation{Department of Electrical \& Computer Engineering, University of Southern California,
Los Angeles, California 90089, USA}
\affiliation{Department of Chemistry, University of Southern California, Los Angeles,
California 90089, USA}

\begin{abstract}
The performance of quantum computers is hindered by decoherence and crosstalk, which cause errors and limit the ability to perform long computations. Dynamical decoupling is a technique that alleviates these issues by applying carefully timed pulses to individual qubits, effectively suppressing unwanted interactions. However, as quantum devices grow in size, it becomes increasingly important to minimize the time required to implement dynamical decoupling across the entire system. Here, we present ``Chromatic-Hadamard Dynamical Decoupling'' (CHaDD), an approach that efficiently schedules dynamical decoupling pulses for quantum devices with arbitrary qubit connectivity. By leveraging Hadamard matrices, CHaDD achieves a circuit depth that scales linearly with the chromatic number of the connectivity graph for general two-qubit interactions, assuming instantaneous pulses. This includes $ZZ$ crosstalk, which is prevalent in superconducting QPUs. CHaDD's scaling represents an exponential improvement over all previous multi-qubit decoupling schemes for devices with connectivity graphs whose chromatic number grows at most polylogarithmically with the number of qubits. For graphs with a constant chromatic number, CHaDD's scaling is independent of the number of qubits. We report on experiments we have conducted using IBM QPUs that confirm the advantage conferred by CHaDD.
Our results suggest that CHaDD can become a useful tool for enhancing the performance and scalability of quantum computers by efficiently suppressing decoherence and crosstalk across large qubit arrays.
\end{abstract}

\maketitle

\section{Introduction}
Quantum computers contain interconnected qubits that experience decoherence and control errors due to spurious interactions with the environment and surrounding qubits.
This adversely affects such devices' ability to retain or process information for periods of time longer than the decoherence timescale, which is necessary for achieving a quantum advantage over classical computers~\cite{Preskill:2018aa}.
In order to surmount this obstacle, considerable attention has been devoted to developing methods that mitigate these deleterious effects, among which dynamical decoupling (DD) has recently been shown to play an important role. DD is an error suppression technique that averages out undesired Hamiltonian terms by applying deterministic~\cite{Viola:98, Viola:99, Zanardi:1999fk, Vitali:99} or random sequences of scheduled pulses \cite{Viola:2005:060502,Santos:2006:150501}. Originally designed to suppress low-frequency noise in nuclear magnetic resonance \cite{Hahn:50, Carr:54, CPMG1958,Maudsley:1986ty}, DD effectively decouples the quantum system from both external noise due to decoherence \cite{Suter:2016aa, Pokharel2018, DD-survey,raviVAQEMVariationalApproach2021, tong2024empirical} and internal noise due to undesired always-on interactions~\cite{Jones_1999, Tsunoda_2020, Tripathi2022, Zhou2023, niu2024multiqubit, evert2024syncopated}.
DD has recently found numerous applications in quantum information processing, e.g., in noise characterization \cite{Byl11a, Alvarez:2011aa, Paz-Silva2014, Norris2016, TripathiTransmon2024, Gross2024} and improving algorithmic performance
\cite{jurcevicDemonstrationQuantumVolume2021, pokharel2022demonstration, Pokharel:2024aa, singkanipa2024demonstration, baumer2023efficient, baumer2024quantum}.

Most of the attention in improving performance through DD has focused on the development of DD sequences that achieve high perturbation-theory-order noise cancellation~\cite{Khodjasteh:2005xu, Khodjasteh:2007zr, Uhrig:2007qf, Biercuk:09, UL:10, West:10, Wang:10, West:2010:130501,Xia:2011uq,Kuo:2012rf}, robustness to pulse errors~\cite{Alvarez:2010ve,Quiroz:2013fv,Genov:2017aa,Genov:2019aa}, or reduction of the requirements for quantum error correction~\cite{KhodjastehLidar:03,Ng:2011dn,Paz-Silva:2013tt,Unden:2016aa,Conrad:2021aa} and error avoidance~\cite{Mena-Lopez:2023aa,quiroz2024dynamically,han2024protecting}. Much less attention has been paid to developing efficient DD pulse sequences for simultaneously decoupling multiple interconnected qubits, and studies of this topic date primarily to early results using Hadamard matrices and orthogonal arrays~\cite{Jones_1999,Leung:01,Stollsteimer:01,Rotteler:2006aa,Wocjan:2006aa,qec-chap15,Bookatz:2016aa,PhysRevA.84.062323}, along with more recent developments~\cite{Paz-Silva:2016aa,Tsunoda_2020,Zhou2023, Shirizly:2024aa, niu2024multiqubit,evert2024syncopated}. This problem grows in complexity for qubit layouts that are more highly connected, i.e., as the qubit graph degree grows. It is generally recognized that a higher graph degree is desirable~\cite{Linke:2017aa,boothby2020nextgeneration}, as this reduces the overhead associated with coupling geometrically distant qubits. 

In this work, we introduce Chromatic-Hadamard Dynamical Decoupling (CHaDD), which provides an efficient solution to completely decoupling an arbitrary qubit interaction graph $G=(V,E)$, where $V$ is the vertex (i.e., qubit) set and $E$ the edge (i.e., qubit-qubit coupling) set.
There are two relevant notions of efficiency: (i) the total power consumed by a pulse sequence per decoupling cycle, quantified in terms of the pulse repetition rate (PRR), i.e., the average number of pulses per unit time; and (ii) the circuit depth per cycle, where simultaneous pulses are counted as a single pulse. 

Concerning circuit depth, until recently, efficient schemes were thought to be at most linear in the total number of qubits $n=|V|$~\cite{Wocjan:2006aa,qec-chap15,Bookatz:2016aa} for instantaneous (``bang-bang''~\cite{Viola:98}) pulses.
Then, it was established that the scaling of complete decoupling of a $ZZ$ crosstalk graph~\cite{Sarovar2020detectingcrosstalk} could be stated in terms of the chromatic number of the graph $\chi(G)$ (an idea dating back to Refs.~\cite{Jones_1999,Tsunoda_2020}), i.e., the minimum number of distinct colors required to properly color the graph, instead of the number of qubits, albeit with exponential scaling, $2^{\chi(G)}$~\cite{evert2024syncopated}. 

Here, we show that it is possible to achieve \emph{efficient}, universal, first-order decoupling. Namely, we prove that in general, the circuit depth scales linearly with the chromatic number, with an improved prefactor for single-axis decoupling (e.g., the special case of $ZZ$ crosstalk). 
This is an exponential improvement over Ref.~\cite{evert2024syncopated}'s single-axis crosstalk suppression result, as well as over all previous multi-axis decoupling schemes building on Hadamard matrices or orthogonal arrays~\cite{Leung:01,Stollsteimer:01,Rotteler:2006aa,Wocjan:2006aa,qec-chap15,Bookatz:2016aa} for graphs with a chromatic number that scales at most polylogarithmically in $n$. The schemes of Refs.~\cite{Wocjan:2006aa,qec-chap15,Bookatz:2016aa} achieve parity with CHaDD when $\chi(G) \sim n$, such as trapped-ion or neutral atom devices with all-to-all connectivity~\cite{Linke:2017aa,Bluvstein:2022aa}. In devices with a constant chromatic number, such as all current superconducting quantum computing hardware, CHaDD's scaling is independent of $n$, providing a significant advantage over decoupling schemes that scale linearly with the number of qubits. In such cases, CHaDD can further help diagnose and suppress crosstalk between non-natively coupled qubits corresponding to an effective connectivity graph with a higher chromatic number. 

Going beyond CHaDD's efficient circuit depth scaling, we also consider the PRR (i.e., power consumption) metric at fixed chromatic number. We demonstrate both theoretically and experimentally that CHaDD achieves similar or better performance at a lower PRR than standard, ``achromatic'' DD pulse sequences, i.e., sequences that are blind to the chromatic number of the underlying qubit connectivity graph. Namely, we show that CHaDD reduces the PRR by $20-33\%$, depending on the specific CHaDD variant.

The structure of this paper is as follows. In \cref{sec:background}, we provide pertinent background on graph coloring, the error model, and dynamical decoupling. In \cref{sec:results}, we define CHaDD and several variants designed to suppress multi-axis decoherence and to improve robustness against pulses errors. We prove our main theorems concerning the circuit depth efficiency of CHaDD and discuss several illustrative examples. In \cref{sec:experiments}, we provide experimental tests of CHaDD and its variants using IBM superconducting quantum processing units (QPUs) and demonstrate that, as predicted by our theory, CHaDD exhibits a better PRR than previously known sequences. We conclude in \cref{sec:conc} and discuss various possible generalizations and open problems suggested by our work. The Appendix provides additional technical details in support of the main text as well as additional confirmation of our experimental results obtained on different IBM QPUs.

\section{Background}
\label{sec:background}

\subsection{Graph coloring}
Coloring is the assignment of colors (labels) to the vertices of a graph such that no two adjacent vertices share the same color. The chromatic number is related to the graph degree: Brooks' Theorem states that for a connected graph $G$ with maximum degree $\Delta$, $\chi(G) \leq \Delta$ except when $G$ is a complete graph or an odd cycle, in which case $\chi(G) = \Delta + 1$~\cite{Brooks:1941aa,Lovasz:1975aa}. Moreover, if the maximum degree of the subgraph induced on a vertex $v\in V$ and its neighborhood is $\Delta(v)$, then $\chi(G) \geq \Delta(v) + 1$, which provides a lower bound using neighborhood degree~\cite{Diestel:book}. Lower bounds also exist in terms of the adjacency matrix~\cite{Wocjan:13,Ando:2015aa}. Finding the chromatic number is one of Karp's $21$ NP-complete problems~\cite{Karp:21-problems}, but in many cases of interest, we do not need to exactly know $\chi(G)$: since $\Delta$ is often a constant in the quantum computing context, we may in such cases simply replace $\chi(G)$ by the graph degree $\Delta$ to obtain an upper bound.

\subsection{Error model}
Consider a system of qubits occupying the vertices $V$ of a graph $G$.
We are concerned with the suppression of undesired interactions between this system and its environment and the suppression of undesired internal system terms arising, e.g., due to crosstalk. Letting $\{\sigma_v^\alpha\}_{\alpha\in\{x,y,z\}}$ denote the Pauli matrices acting only on qubit $v$, quite generally this scenario can be represented in terms of the following Hamiltonian:
\bes
\label{eq:error-hamiltonian}
\begin{align}
\label{eq:error-hamiltonian-a}
    H &= H_1 + H_2, \\
\label{eq:error-hamiltonian-b}
    H_1 &= \sum_{\alpha\in\{x,y,z\}} H_1^\alpha \ , \ \
    H_2 = \sum_{\alpha,\beta\in\{x,y,z\}} H_2^{\alpha\beta}, \\
\label{eq:error-hamiltonian-c}
    H_1^\alpha &= \sum_{v \in V} \sigma^\alpha_v \otimes B^\alpha_v \ , \ \   
    H_2^{\alpha\beta} =\!\!\!  \sum_{\{(u,v) \in E\ | \ u<v\}} \!\!\!  \sigma^\alpha_u \sigma^\beta_v \otimes B^{\alpha\beta}_{uv}.
\end{align}
\ees
In \cref{eq:error-hamiltonian-c}, we sum over ordered pairs of vertices, and $B^\alpha_v = \omega^\alpha_v \tilde{B}^\alpha_v$ ($B^{\alpha\beta}_{uv} = J^{\alpha\beta}_{uv} \tilde{B}^{\alpha\beta}_{uv}$), where $\tilde{B}^\alpha_v$ ($\tilde{B}^{\alpha\beta}_{uv}$) are dimensionless operators that act purely on the environment, and $\omega^\alpha_v$ ($J^{\alpha\beta}_{uv})$ are the corresponding couplings with dimensions of energy. The unperturbed evolution given by the ``free evolution unitary'' $f_\tau = \exp(-i\tau H)$ is then generally non-unitary and decoherent. In the case of undesired internal terms, the $\tilde{B}^\alpha_v$ ($\tilde{B}^{\alpha\beta}_{uv}$) represent the identity operator on all qubits but $v$ (and $u$). $f_\tau$ is then unitary but subject to coherent errors.

\subsection{Dynamical Decoupling}
DD is based on the application of short and narrow (``bang-bang''~\cite{Viola:98}) pulses $P_{\alpha,v}(\theta) = \exp[-i (\theta/2) \sigma_v^\alpha]$. Here we are concerned primarily with the case of $\pi$-pulses and denote $X_v = P_{x,v}(\pi) = -i\sigma_v^x$ (an ``$X$ pulse''), and similarly for $Y$ and $Z$.
It is well known that to suppress decoherence and coherent errors, it suffices to apply $\pi$ pulses at regular intervals, as long as these pulses cycle over the elements of a group~\cite{Zanardi:1999fk}. 
Applying such a sequence synchronously to all qubits would suppress $H_1$ but not $H_2$ since it would commute with the pulses.
To dynamically decouple several qubits in a manner that suppresses both $H_1$ and $H_2$, one can use Hadamard matrices and orthogonal arrays to schedule the pulses in such a way that prevents the pulses from commuting with $H_2$~\cite{Leung:01,Stollsteimer:01,Rotteler:2006aa,Wocjan:2006aa,qec-chap15,Bookatz:2016aa}.
However, such schemes are suboptimal if they do not account for the qubit connectivity graph; recent work that does resulted in a scheme requiring
a cost of $2^{\chi(G)}$ pulses to decouple all $ZZ$ crosstalk in a graph $G$~\cite{evert2024syncopated} (the special case $H=H_2^{zz}$). We now show how this can be both exponentially reduced in cost and extended to deal with a general Hamiltonian $H$ [\cref{eq:error-hamiltonian}].

\section{Results}
\label{sec:results}
We first specialize to the case where $H_1$ and $H_2$ do not include pure-$x$-type terms. This includes $ZZ$ crosstalk. 
\begin{mytheorem}[single-axis CHaDD]
\label{thm:chromatic-hadamard}
Assume the free evolution unitary is $f_\tau = e^{-i \tau H}$, where $H$ is given by \cref{eq:error-hamiltonian}. Then a circuit depth of $\chi \le N \leq 2 \chi$, involving only $X$ pulses, suffices to cancel to first order in $\tau$ 
all terms in $H$ excluding $H_1^x $ and $H_2^{xx}$
on a qubit connectivity graph $G$ with chromatic number $\chi$.
\end{mytheorem}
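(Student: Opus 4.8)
The plan is to combine graph coloring with Hadamard-matrix scheduling in a two-layer construction. First I would properly color $G$ with $\chi$ colors, partitioning the qubits into color classes $V_1,\dots,V_\chi$ so that every edge of $E$ joins two distinct classes. The key point is that the single-axis structure (no pure-$x$ terms in $H_1$ or $H_2$) means the only pulses we need are $X$ pulses, and an $X$ pulse conjugates $\sigma^z \mapsto -\sigma^z$, $\sigma^y\mapsto -\sigma^y$, while leaving $\sigma^x$ invariant. Assigning to each qubit a $\pm 1$-valued ``decoupling signal'' $s_v(t)$ (the sign picked up by operators on $v$ from the pulses applied up to time $t$), a term $\sigma^\alpha_v\otimes B^\alpha_v$ with $\alpha\in\{y,z\}$ is averaged to zero over the cycle iff $\sum_t s_v(t)=0$, and a two-body term $\sigma^\alpha_u\sigma^\beta_v\otimes B^{\alpha\beta}_{uv}$ with $(\alpha,\beta)\neq(x,x)$ is averaged to zero iff $\sum_t s_u(t)s_v(t)=0$ for the appropriate sign combination — in all surviving cases at least one of $u,v$ contributes a nontrivial sign, so it suffices that the signal vectors of $u$ and $v$ be orthogonal and each individually sum to zero.

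So the combinatorial task reduces to: assign to each qubit a row of $\pm1$'s such that (a) qubits in the same color class may share a row, (b) qubits in different color classes get orthogonal rows, and (c) every row (except possibly an all-ones row, which we will not use) sums to zero. This is exactly what a Hadamard matrix provides: take an $H_m$ with $m = 2^{\lceil \log_2 \chi\rceil}$ (so $\chi \le m \le 2\chi$, assuming a Hadamard matrix of that order exists — the power-of-two case is $H_2^{\otimes k}$, Sylvester's construction, so this is unconditional), discard the all-ones row, and assign distinct rows to the distinct color classes. Each kept row is orthogonal to every other and sums to zero; qubits in the same class get identical signals, which is fine because there are no edges within a class. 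The circuit depth is the number of time steps $N = m$, and since we may optimize the choice of Hadamard order over all valid orders between $\chi$ and $2\chi$, we get $\chi \le N \le 2\chi$. (Using $H_m$ with $m\ge \chi$ rather than the minimal one, and only when a Hadamard matrix of exactly order $\chi$ fails to exist, is what forces the factor-of-two slack; if one existed at every needed order the bound would be simply $N=\chi$.)

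The remaining step is to translate a row of signs into an actual pulse sequence and verify the first-order cancellation rigorously. Between consecutive time steps $t$ and $t+1$ we apply, on each qubit $v$, an $X_v$ pulse iff $s_v$ flips sign there; concatenating the free-evolution segments $f_\tau$ interleaved with these pulse layers and expanding $U_{\mathrm{cycle}} = \mathcal{T}\exp(-i\int H(t)\,dt)$ to first order in $\tau$, the first-order term is $\tau\sum_t \tilde H(t)$ where $\tilde H(t)$ is $H$ conjugated by the accumulated pulses, i.e., each operator weighted by its sign product at step $t$. Conditions (b) and (c) make every term of $H$ (except $H_1^x$, $H_2^{xx}$) vanish in this sum. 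I would present this via the standard toggling-frame / average-Hamiltonian formalism, which is already implicit in the DD background section.

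The main obstacle I anticipate is bookkeeping the sign combinations for the mixed two-body terms $\sigma^y_u\sigma^z_v$ and the like: an $X$ pulse flips both $y$ and $z$, so $s_u s_v$ is the correct averaging variable for \emph{every} surviving two-body term simultaneously, and one must check that the single orthogonality condition $\langle \text{row}_u,\text{row}_v\rangle = 0$ indeed kills all of them at once — this is true precisely because $X$ acts identically (as $-1$) on both non-$x$ axes, which is the structural reason the single-axis case is cheaper than the general case. The other point requiring care is the edge case $\chi=1$ (no edges, $N=1$, nothing to do on two-body terms) and ensuring the discarded all-ones row is never needed, which holds because with $\chi$ color classes we need only $\chi \le m-1$ rows from an $m\times m$ Hadamard matrix whenever $m \ge \chi+1$; the boundary $m=\chi$ with $m$ a power of two is handled by instead taking $m=2\chi$.
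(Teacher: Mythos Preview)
Your proposal is correct and follows essentially the same route as the paper: proper-color $G$, assign each color class to a distinct nonzero row of the Sylvester Hadamard matrix $W_\nu$ (the paper takes $\nu = \lfloor \log_2 \chi \rfloor + 1$, which automatically handles your power-of-two edge case and gives exactly your corrected $m$), and verify first-order cancellation via the BCH expansion using precisely the zero-row-sum and row-orthogonality properties you identify. The paper writes the toggling-frame bookkeeping as conjugation of $f_\tau$ at step $j$ by $\overline{X}_j = \bigotimes_{c} \widetilde{X}_c^{\,g(c)\cdot j}$, which is the same object as your cumulative sign $s_v(t)$; one small slip in your aside is that even if a Hadamard matrix existed at every order you could not achieve $N=\chi$, since discarding the all-ones row forces $N\ge \chi+1$.
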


\begin{proof}
Let $\chi$ be the chromatic number of the graph $G$ corresponding to the Hamiltonian $H$ given by \cref{eq:error-hamiltonian}, and let $f:V\rightarrow\left\{1,2,\dots,\chi(G)\right\}\equiv C$ be a proper $\chi$-coloring of the graph $G$.
Define $V_c \equiv \{ v \in V\ |\ f(v) = c \}$, i.e, the set of all vertices of the same color $c$, and $E_{c_1,c_2} \equiv \{ (u,v)\in E\ |\ u<v,  f(u)=c_1, f(v)=c_2 \}$, i.e., the set of all ordered pairs of vertices of colors $c_1$ and $c_2$, respectively, joined by edges in $E$.
Now we return to \cref{eq:error-hamiltonian-c} and group the terms according to $G$'s coloring:
\begin{align}
\label{eq:H-regroup}
   H^\alpha_1 &=   \sum_{c \in C} \sum_{v \in V_c}  \sigma^\alpha_v \ , \ \ \ 
   H^{\alpha\beta}_2 = \sum_{c_1 \neq c_2} \sum_{(u,v) \in E_{c_1,c_2}}  \sigma^\alpha_u \sigma^\beta_v .
\end{align}
We have suppressed the bath operators in \cref{eq:H-regroup} for notational simplicity; they will not matter in our calculations below since we only consider first-order time-dependent perturbation theory through the Baker-Campbell-Hausdorff (BCH) expansion; the effect of non-commuting bath operators appears only to second order.

If we conjugate the free evolution unitary $f_\tau$ by $X$ pulses applied to all qubits of the same color $c$, i.e., by $\widetilde{X}_c \equiv \bigotimes_{v \in V_c} X_v$, we flip the sign of all $\sigma^\alpha_v$ terms corresponding to qubits $v$ of color $c$ since they anticommute with $\widetilde{X}_c$. Similarly, for the two body terms, we flip the sign of all $\sigma^\alpha_u \sigma^\beta_v$ terms where one of the qubits $u,v$ is of color $c$ and the corresponding Pauli operator anticommutes with $X$. Thus, 
\bes
\label{eq:conj-by-Xc}
\begin{align}
\label{eq:conj-by-Xc-a}
& \widetilde{X}_c f_\tau \widetilde{X}_c^\dagger = \exp \left[ -i \tau \left( \widetilde{H}_{1,c} + \widetilde{H}_{2,c} \right) \right] \\
\label{eq:conj-by-Xc-b}
& \widetilde{H}_{1,c} = \sum_{\alpha\in\{x,y,z\}} \widetilde{H}^{\alpha}_{1,c} \ , \quad
    \widetilde{H}_{2,c} = \sum_{\alpha,\beta\in\{x,y,z\}} \widetilde{H}^{\alpha\beta}_{2,c} \\
&
    \widetilde{H}^{\alpha}_{1,c} = \widetilde{X}_c H^\alpha_1 \widetilde{X}_c^\dagger =
    \sum_{c' \in C} (-1)^{\delta_{cc'}(1-\delta_{\alpha x})} \sum_{v \in V_{c'}}  \sigma^{\alpha}_v \\
    &
    \widetilde{H}^{\alpha\beta}_{2,c} = \widetilde{X}_c H^{\alpha\beta}_2 \widetilde{X}_c^\dagger = \\
&\quad \sum_{c_1 \neq c_2}
    (-1)^{\delta_{cc_1}(1-\delta_{\alpha x})}
    (-1)^{\delta_{cc_2}(1-\delta_{\beta x})}
    \!\!\!\! \sum_{(u,v) \in E_{c_1,c_2}} \!\!\!\! \sigma^\alpha_u \sigma^\beta_v \notag .
\end{align}
\ees
Note that $\widetilde{H}^x_{1,c}=H^x_1$ and $\widetilde{H}^{xx}_{2,c}=H^{xx}_2$.
To go beyond qubits of a single color $c$, 
we form a criterion for determining whether to conjugate $f_\tau$ by $\widetilde{X}_c$ for a given color $c$ at time step $j$ according to the Hadamard matrix.

Let $\nu = \left\lfloor \log_2 \chi \right\rfloor + 1$, $N=2^\nu$, and consider the $N\times N$ Hadamard matrix
\begin{align}
    W_\nu \equiv W_1^{\otimes \nu} = \frac1{\sqrt{N}} \sum_{i,j\in\{0,1\}^\nu} (-1)^{i \cdot j} \ketb{i}{j},
\end{align}
where $W_1$ is the standard $2\times 2$ Hadamard matrix and where the dot product is the bit-wise scalar product with addition modulo 2.
Below, we use $i$ and $j$ to denote an integer or its binary expansion (i.e., $i=i_0 i_i\dots i_{\nu-1}$), depending on the context.

Consider any injective function $g: C \rightarrow \{1,2,\dots,N-1\}$ that maps distinct colors in $C$ to distinct rows $i>0$ of the Hadamard matrix $W_\nu$. Note that $\nu$ is the number of bits needed to account for every color $c \in C$.
For every such color, we use row $i=g(c)$ to schedule the decoupling scheme for the qubits in $V_c$:
at each time step $j=0,1,\dots,N-1$, if $\bra{i}W_\nu\ket{j}=(-1)^{i \cdot j}=-1$, i.e., if $i \cdot j \equiv 1 \mod 2$, we conjugate $f_\tau$ by $\widetilde{X}_c$.
Equivalently, for each color $c$ at each time step $j$, we conjugate $f_\tau$ by $\widetilde{X}_c^{g(c) \cdot j}$.
The resulting unitary control propagator across all colors $c \in C$ at time step $j$ is then 
\begin{align}
\label{eq:U^x_j}
   U^x_j \equiv \overline{X}_j f_\tau \overline{X}_j^\dagger\ , \quad  \overline{X}_j \equiv \bigotimes_{c \in C} \widetilde{X}_c^{g(c) \cdot j} \ .
\end{align}
The only difference from \cref{eq:conj-by-Xc} is that now the indicator functions are generalized from a single color (such as $\delta_{cc'}$) to the set of all colors dictated by the Hadamard matrix at time step $j$, i.e., to $g(c) \cdot j$. Consequently, \cref{eq:conj-by-Xc} is replaced by
\bes
\label{eq:conj-by-Xj}
\begin{align}
\label{eq:conj-by-Xj-a}
    &
    U^x_j  =
    \exp \left[ -i \tau \left( \overline{H}_{1,j} + \overline{H}_{2,j} \right) \right] 
    \\
\label{eq:conj-by-Xj-b}
    &
    \overline{H}_{1,j}
    =
    \sum_{\alpha\in\{x,y,z\}} \overline{H}^\alpha_{1,j} \ , \ \ \ 
    \overline{H}_{2,j}
    =
    \sum_{\alpha,\beta\in\{x,y,z\}} \overline{H}^{\alpha\beta}_{2,j} 
    \\
\label{eq:conj-by-Xj-c}
    &
    \overline{H}^\alpha_{1,j}
    =
    \overline{X}_j H^\alpha_1 \overline{X}_j^\dagger =
    \sum_{c \in C} (-1)^{g(c) \cdot j(1-\delta_{\alpha x})}
    \sum_{v \in V_c}  \sigma^\alpha_v\\ 
    & 
    \overline{H}^{\alpha\beta}_{2,j}
    =
    \overline{X}_j H^{\alpha\beta}_2 \overline{X}_j^\dagger =  \\
    &\quad \sum_{c_1 \neq c_2} (-1)^{g(c_1) \cdot j(1-\delta_{\alpha x})} (-1)^{g(c_2) \cdot j(1-\delta_{\beta x})}
    \!\!\!\!\sum_{(u,v) \in E_{c_1,c_2}}  \!\!\!\! \sigma^\alpha_u \sigma^\beta_v . \notag
\end{align}
\ees
Similarly, $\overline{H}^x_{1,j}=H^x_1$ and $\overline{H}^{xx}_{2,j}=H^{xx}_2$, meaning that pure-$x$ terms are invariant.

Each $U^x_j$ takes one time step of length $\tau$. 
Then, using the BCH expansion, the entire sequence of $N$ time steps and total duration $T=N\tau$ has the following unitary:
\begin{align}
    \prod_{j=0}^{N-1} U^x_j &=
    \exp \Big[ -\! i \tau \sum_{j=0}^{N-1} \overline{H}_{1,j} + \overline{H}_{2,j}\Big]
    + \mathcal{O}\left(T^2\right) .
    \end{align}
Now, observe that the single-qubit sum 
$\sum_ j \overline{H}_{1,j} = N H^x_1$,
since $ \sum_{j=0}^{N-1} (-1)^{g(c) \cdot j}=0$ $\forall~g(c)\neq0$, where we used the fact that all rows $i=g(c)>0$ of the Hadamard matrix have entries that add up to zero. Similarly, the two-qubit sum
$\sum_ j \overline{H}_{2,j}  = N H^{xx}_2$, since $\sum_{j=0}^{N-1} (-1)^{g(c_1) \cdot j} (-1)^{g(c_2) \cdot j}=\delta_{g(c_1),g(c_2)}=0$, where we used the fact that the Hadamard matrix $W_\nu$ is orthogonal, i.e., $   \delta_{i_1,i_2}= \bra{i_1} W_\nu^T W_\nu \ket{i_2} =
\frac1{N} \sum_{j\in\{0,1\}^\nu} (-1)^{i_1 \cdot j} (-1)^{i_2 \cdot j}$, and that $g$ is injective, i.e., $c_1 \neq c_2 \implies i_1 = g(c_1) \neq g(c_2) = i_2$. Thus, 
\begin{align}
    \label{eq:single-axis-chadd-unitary}
U^x\equiv \prod_{j=0}^{N-1} U^x_j = \exp \left[ -i T ( H^x_1 + H^{xx}_2 )\right]
    + \mathcal{O}\left(T^2\right) ,
\end{align}
and we have eliminated to first order in $\tau$ all non-pure-$x$ terms in $H_1$ and $H_2$.
The total number of time steps, i.e., circuit depth, is $N=2^\nu=2^{\left\lfloor \log_2 \chi \right\rfloor + 1}$, i.e., $\chi \le N \leq 2 \chi$. 
\end{proof}

\begin{mydefinition}
The sequence $U^x = \prod_{j=0}^{N-1} U^x_j$ is called single-axis $x$-type CHaDD. Other single-axis CHaDD sequences are obtained by replacing $x$ with another axis.
\end{mydefinition}

As is the case in the single-qubit XY4 sequence~\cite{Maudsley:1986ty}, we can obtain \emph{multi-axis} CHaDD sequences by concatenating single-axis CHaDD sequences about perpendicular axes~\cite{Khodjasteh:2005xu}. This leads a concatenated multi-axis CHaDD protocol requiring a quadratic circuit depth in $\chi(G)$; see \cref{app:A}. However, as we show next, for $\chi(G)\ge 3$, a more efficient multi-axis CHaDD protocol is possible that retains the linear scaling in $\chi(G)$. For $\chi(G)=2$, both protocols require a circuit depth of $16$.

Rather than associating a color with a single row of a Hadamard matrix, we need three rows per color for multi-axis CHaDD. To explain the protocol we need to first introduce some additional terminology~\cite{Leung:01,Stollsteimer:01,Rotteler:2006aa,Wocjan:2006aa,qec-chap15,Bookatz:2016aa}.
A sign matrix $S_{\chi,N}$ 
is a $\chi\times N$ matrix with $\pm 1$ entries ($W_\chi$ is a special case). The Schur product $C = A\circ B$ of two $\chi\times N$ sign matrices $A$ and $B$ is the entry-wise product $C_{ij} = A_{ij}B_{ij}$. A set of $\chi\times N$ sign matrices is Schur-closed if it is closed under the Schur product. A \emph{Schur subset} is a set of three different rows of $S_{\chi,N}$ that multiply entry-wise to $++\cdots +$. For example, $\{(+--),(-+-),(--+)\}$ is a Schur subset of some $S_{\chi\ge 3,3}$. Note that Schur subsets are Schur-closed. We will identify Schur subsets with colors.

\begin{mytheorem}[multi-axis CHaDD]
\label{thm:chromatic-sign}
Under the same assumptions as in \cref{thm:chromatic-hadamard}, a circuit depth of $3\chi+1 \le N \le 2(3\chi+5)$, involving only single-qubit Pauli pulses, suffices to cancel to first order in $\tau$ \emph{all} terms in $H$ on a qubit connectivity graph $G$ with chromatic number $\chi$.
\end{mytheorem}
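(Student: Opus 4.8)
The plan is to follow the template of \cref{thm:chromatic-hadamard}, but to associate to each color a \emph{Schur subset} of three rows rather than a single row, so that the three $\pm 1$ entries at each time step select which of the four Paulis $\{I,X,Y,Z\}$ is applied to the corresponding color class. Concretely, fix a proper $\chi$-coloring $f:V\to C$ and regroup $H$ by colors exactly as in \cref{eq:H-regroup}. Take $N=2^\nu$ and the Sylvester--Hadamard matrix $W_\nu$, and suppose we have assigned to each color $c\in C$ a Schur subset $T_c=\{a_c,b_c,d_c\}$ of three \emph{nonzero} rows of $W_\nu$ with $a_c\circ b_c\circ d_c=++\cdots+$, where distinct colors receive \emph{disjoint} subsets. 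At time step $j$ the triple $(a_c(j),b_c(j),d_c(j))$ has product $+1$, hence equals one of $(+,+,+),(+,-,-),(-,+,-),(-,-,+)$; declare these to mean ``apply $I,X,Y,Z$, respectively, to every qubit in $V_c$.'' Writing $\overline{P}_j=\bigotimes_{c\in C}\widetilde{P}_{c,j}$ for the resulting pulse layer and $U_j=\overline{P}_j f_\tau \overline{P}_j^\dagger$, the circuit depth is $N$.

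The sign bookkeeping is then forced, and this is exactly why a Schur subset is the right object: for $v\in V_c$, conjugation by $\widetilde{P}_{c,j}$ sends $\sigma^x_v\mapsto a_c(j)\,\sigma^x_v$, $\sigma^y_v\mapsto b_c(j)\,\sigma^y_v$, $\sigma^z_v\mapsto d_c(j)\,\sigma^z_v$, the three acquired signs always multiplying to $+1$ in accordance with $\sigma^x_v\sigma^y_v\sigma^z_v$ being proportional to the identity. A two-body term $\sigma^\alpha_u\sigma^\beta_v$ with $u\in V_{c_1}$, $v\in V_{c_2}$ (necessarily $c_1\neq c_2$, since $(u,v)\in E$ and $f$ is proper) acquires the product of the two relevant signs. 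Applying the BCH expansion as in \cref{thm:chromatic-hadamard} gives $\prod_{j=0}^{N-1}U_j=\exp[-i\tau\sum_{j}(\overline{H}_{1,j}+\overline{H}_{2,j})]+\mathcal{O}(T^2)$ with $T=N\tau$. Every single-qubit term cancels because each of $a_c,b_c,d_c$ is a nonconstant row of $W_\nu$ (every row except the all-ones one) and therefore sums to zero over $j$; every two-body term cancels because, for $c_1\neq c_2$, any row of $T_{c_1}$ is distinct from any row of $T_{c_2}$ (disjointness) and distinct Hadamard rows are orthogonal. Hence $\prod_j U_j=I+\mathcal{O}(T^2)$: unlike in \cref{thm:chromatic-hadamard}, there are now no surviving pure-$x$ terms, so all of $H$ is removed to first order.

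It remains to supply the combinatorial input and thereby the bounds on $N$: how small can $N=2^\nu$ be so that $W_\nu$ contains $\chi$ pairwise-disjoint Schur subsets? Identifying the nonzero rows of $W_\nu$ with the points of $\mathrm{PG}(\nu-1,2)$, a Schur subset is precisely a projective line and a pairwise-disjoint family is a partial line spread. The lower bound $N\ge 3\chi+1$ is immediate, since $3\chi$ distinct nonzero rows are needed and $W_\nu$ has only $2^\nu-1$ of them. For the upper bound I would invoke the known maximal partial line spread of binary projective space: $\tfrac{2^\nu-1}{3}$ lines when $\nu$ is even (a full spread, via the $\mathbb{F}_4$-structure on $\mathbb{F}_2^\nu$) and $\tfrac{2^\nu-5}{3}$ lines when $\nu$ is odd (spread a hyperplane and adjoin lines meeting its complement in two points). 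Thus $\chi$ disjoint Schur subsets exist in $W_\nu$ whenever $2^\nu\ge 3\chi+1$ with $\nu$ even, or $2^\nu\ge 3\chi+5$ with $\nu$ odd; picking $\nu$ minimal subject to the existence of $\chi$ such subsets yields $3\chi+1\le N\le 2(3\chi+5)$, with the boundary cases $\chi\in\{1,2\}$ checked by hand (e.g.\ $N=16$ for $\chi=2$: $\mathrm{PG}(2,2)$ has no two disjoint lines, but $\mathrm{PG}(3,2)$ has a $5$-line spread).

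The main obstacle is the combinatorial lemma of the previous paragraph: getting the constant in the upper bound down to $2(3\chi+5)$ forces the use of odd $\nu$, and for odd $\nu$ one genuinely needs a near-optimal partial line spread rather than the transparent $\mathbb{F}_4$ full spread available in even dimension — indeed the ``$+5$'' is exactly the defect $2^\nu-3\cdot(\text{max partial spread})$ in odd dimension. Everything else — the sign bookkeeping, the BCH truncation, and the vanishing of the Hadamard row sums and of the cross-color inner products — is a direct adaptation of the proof of \cref{thm:chromatic-hadamard}, and the small-$\chi$ cases are finite verifications.
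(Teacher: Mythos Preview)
Your proposal is correct and follows essentially the same approach as the paper: assign to each color a Schur subset of three nonzero Hadamard rows (disjoint across colors), let the $\pm1$ triple at each time step select the Pauli to apply on that color class, and use row-sum-zero and row orthogonality to kill the one- and two-body terms; the circuit-depth bounds come from the maximal partial line spread counts $(2^\nu-1)/3$ (even $\nu$) and $(2^\nu-5)/3$ (odd $\nu$), which the paper cites from Leung and Beutelspacher and discusses in its appendix via the same $\mathrm{PG}(\nu-1,2)$ identification you use. Your write-up is in places more explicit than the paper's (the projective-geometry framing, the $\sigma^x\sigma^y\sigma^z\propto I$ consistency remark, the small-$\chi$ checks), but the argument is the same.
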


The proof combines ideas from Refs.~\cite{Leung:01,qec-chap15} with our single-axis CHaDD approach. 

\begin{proof}
Assign to each color $c$ a Schur subset via an injective function $h$. For a given Schur subset $h(c)$, label its three rows $h(c;x)$, $h(c;y)$, and $h(c;z)$. Define the $\chi \times N$ sign matrix $S_\alpha$, where $\alpha\in\{x,y,z\}$, as the matrix whose rows are $\{h(c;\alpha)\}_{c \in C}$. These sign matrices $S_\alpha$ are Schur-closed. This implies that for each fixed $(c,j)$'th entry, $([S_x]_{c j},[S_y]_{c j},[S_z]_{c j})$ can only be one of $(+,+,+)$, $(+,-,-)$, $(-,+,-)$, or $(-,-,+)$.
These tuples correspond to whether or not a given Pauli operator $\widetilde\sigma_{(c,j)}$ acting on all qubits of color $c$ commutes (+) or anticommutes (-) with $(\widetilde{X}_c,\widetilde{Y}_c,\widetilde{Z}_c)$, i.e., $\widetilde\sigma_{(c,j)}=\widetilde{I}_c$, $\widetilde{X}_c$, $\widetilde{Y}_c$, and $\widetilde{Z}_c$, respectively.
The signs acquired after conjugating Pauli terms $(\sigma^\alpha_v)_{\alpha\in\{x,y,z\},f(v)=c}$ by $\widetilde\sigma_{(c,j)}$ are the $(c,j)$'th entries $([S_\alpha]_{c j})_{\alpha\in\{x,y,z\}}$.

Then the unitary control propagator is $\overline\sigma_j \equiv \bigotimes_{c \in C} \widetilde\sigma_{(c,j)}$, and when we conjugate $f_\tau$ by $\overline\sigma_j$, i.e., $U_j \equiv \overline\sigma_j f_\tau \overline\sigma_j^\dagger$,
the sign acquired by $\sigma_v^\alpha$ is $[S_\alpha]_{c j}$ if $f(v)=c$,
while the sign acquired by $\sigma_u^\alpha \sigma_v^\beta$ ($u<v$) is $[S_\alpha]_{c_1 j} [S_\beta]_{c_2 j}$ if $f(u)=c_1$ and $f(v)=c_2$:
\bes
\label{eq:conj-by-sigmaj}
\begin{align}
\label{eq:conj-by-sigmaj-a}
    &
    U_j  =
    \exp \left[ -i \tau \left( \overline{H}_{1,j} + \overline{H}_{2,j} \right) \right] 
    \\
\label{eq:conj-by-sigmaj-b}
    &
    \overline{H}_{1,j}
    =
    \sum_{\alpha\in\{x,y,z\}} \overline{H}^\alpha_{1,j} \ , \ \ \ 
    \overline{H}_{2,j}
    =
    \sum_{\alpha,\beta\in\{x,y,z\}} \overline{H}^{\alpha\beta}_{2,j} 
    \\
\label{eq:conj-by-sigmaj-c}
    &
    \overline{H}^\alpha_{1,j}
    =
    \overline\sigma_j H^\alpha_1 \overline\sigma_j^\dagger =
    \sum_{c \in C} [S_\alpha]_{c j}
    \sum_{v \in V_c}  \sigma^\alpha_v\\ 
    & 
    \overline{H}^{\alpha\beta}_{2,j}
    =
    \overline\sigma_j H^{\alpha\beta}_2 \overline\sigma_j^\dagger =  \\
    &\quad \sum_{c_1 \neq c_2} [S_\alpha]_{c_1 j} [S_\beta]_{c_2 j}
    \!\!\!\!
    \sum_{(u,v) \in E_{c_1,c_2}}  \!\!\!\! \sigma^\alpha_u \sigma^\beta_v . \notag
\end{align}
\ees
The overall unitary for all $N$ time steps is now
\begin{align}
    U \equiv \prod_{j=0}^{N-1} U_j &=
    \exp \Big[ -\! i \tau \sum_{j=0}^{N-1} \overline{H}_{1,j} + \overline{H}_{2,j}\Big]
    + \mathcal{O}\left(T^2\right) .
\end{align}
To remove all one- and two-local terms, i.e., $\sum_j \overline{H}_{1,j} = 0$ and $\sum_j \overline{H}_{2,j} = 0$,  we need $\sum_j [S_\alpha]_{c j} = 0$ and $\sum_j [S_\alpha]_{c_1 j} [S_\beta]_{c_2 j} = 0$. The first condition follows immediately from the fact that these are rows of the Hadamard matrix, and the second follows from the fact that each Schur-subset was drawn from different partitions ($c_1 \ne c_2$) of the Hadamard matrix, so the rows of the sign matrices $S_\alpha$ are orthogonal. 

The Hadamard matrix $W_\nu$ can be partitioned into exactly $S_e = (2^\nu-1)/3$ Schur-subsets if $\nu$ is even and at most $S_o = (2^\nu-5)/3$ if $\nu$ is odd~\cite{Leung:01} (see also \cite[Thm.~4.1]{Beutelspacher:1975aa} and \cref{app:B}). Let $\nu=2k$ and $k\in\mathbb{Z}^+$, so $S_e(k) = (2^{2k}-1)/3$ and $S_o = (2^{2k+1}-5)/3$. Since the circuit depth is $N=2^\nu$, we would like to minimize $k$. We need at least as many Schur subsets as colors, hence $\min\{S_e(k),S_o(k)\}\ge \chi$. Minimizing $k$, we have $k = \min\{f_e(\chi), f_o(\chi) \}$, where $f_e(\chi) = \lceil \frac{1}{2} \log_2(3\chi+1) \rceil $ and $f_o(\chi) = \lceil \frac{1}{2} [\log_2(3\chi+5) -1]\rceil$. Thus, we obtain $N = \min\{2^{2\lceil \frac{1}{2} \log_2(3\chi+1) \rceil} , 2^{2\lceil \frac{1}{2} [\log_2(3\chi+5) -1]\rceil+1} \} \in \Theta(\chi)$, i.e., 
$3\chi+1 \le N \le 2(3\chi+5)$.
\end{proof}

We explain the upper and lower bounds in \cref{app:C}. Since $N$ is a step function in $\chi$, these bounds are loose (e.g., $N=16$ for $2\le \chi \le 5$, and $N=32$ for $6\le \chi \le 9$), but they are nearly tight envelopes. 

\subsection{Examples}
We illustrate single-axis $x$-type CHaDD for a few examples with low chromatic numbers.
Since $\chi(G)\geq2$ for graphs with $|E|\geq1$, we use graphs $G$ with $\chi(G)=2,3$, i.e., $\nu = \left\lfloor \log_2 \chi(G) \right\rfloor + 1 = 2$. 
By \cref{thm:chromatic-hadamard}, a circuit depth of $N=2^\nu=4$ is then sufficient to cancel all non-pure-$x$ interactions (e.g., $ZZ$ crosstalk) and qubit decoherence to first order, scheduled according to the Hadamard matrix $W_2$ shown in \cref{fig:g-W2} (middle).

\begin{figure}[t]
\centering
\includegraphics[width=0.98\columnwidth]{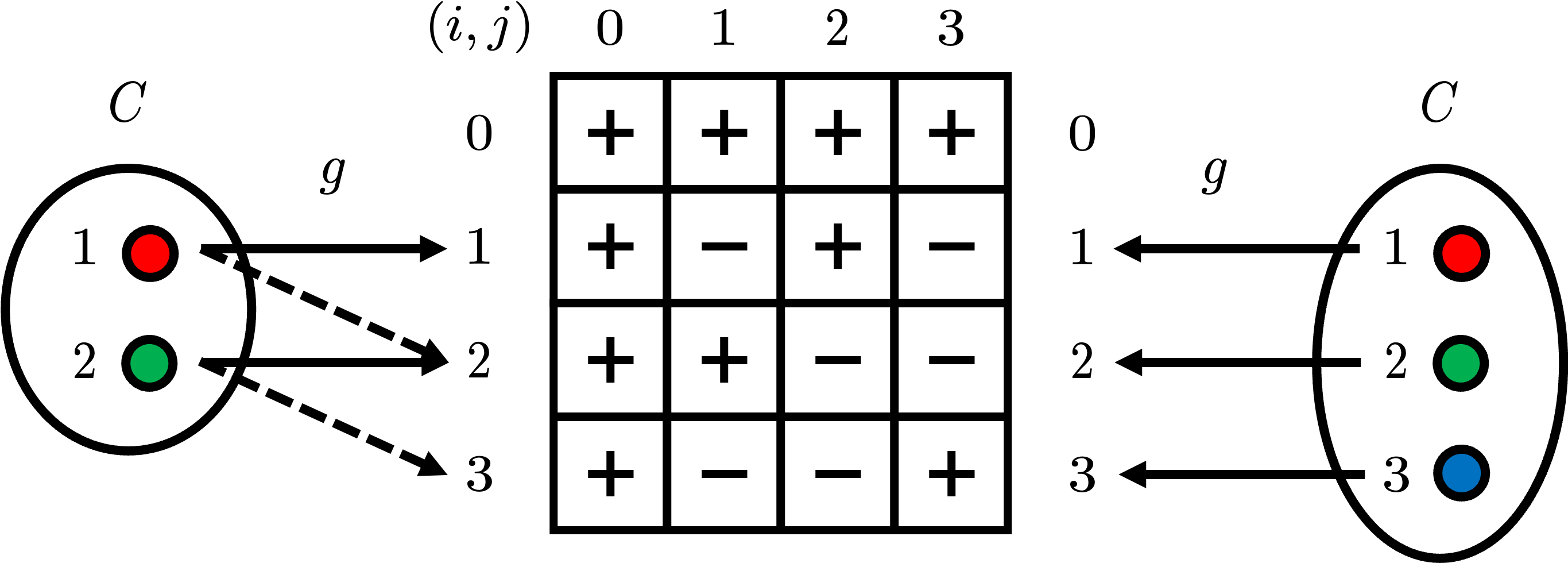}
\caption{Color-to-row maps for $2$-colorable (left) and $3$-colorable (right) graphs. Solid arrows correspond to an unbalanced schedule, dashed to a balanced schedule. Middle: the Hadamard matrix $W_2$.
}
\label{fig:g-W2}
\end{figure}

Per \cref{thm:chromatic-hadamard}, the first row ($i=0$) is not used. 
For $\chi(G)=3$, we must use all three of the remaining rows $i=1,2,3$, but for $\chi(G)=2$, we can choose any pair of rows.

\subsubsection{Example 1: $W_2$ for $\chi(G)=2$}
Consider a square grid of qubits with nearest-neighbor coupling, as in \cref{fig:schedules}  (top and middle).
This graph $G$ is $2$-colorable (bipartite), i.e., there exists a proper $2$-coloring $f:V\rightarrow\{1,2\} \equiv C$.
Examples of such graphs include the Rigetti Ankaa family of chips~\cite{qcs-qpus}
and the IBM heavy-hex layout~\cite{ibm-heavy-hex}. 

Let us choose the identity color-to-row map $i=g(c)=c$ so that colors $c=1,2$ correspond to rows $i=1,2$, respectively, as in \cref{fig:g-W2} (left, solid arrows).
We conjugate the free evolution $f_\tau$ by all $\widetilde{X}_i$ operators for which $\left<i\right|W_2\left|j\right>=-1$ for each time step (column) $j$. This readily yields 
$U^x_0=f_\tau$ for the first interval,
$U^x_1 = \widetilde{X}_1 f_\tau \widetilde{X}_1^\dagger$ for the second,
$U^x_2=\widetilde{X}_2 f_\tau \widetilde{X}_2^\dagger$ for the third, and, since
$\overline{X}_3=\widetilde{X}_1\widetilde{X}_2$, 
finally $U^x_3 = (\widetilde{X}_1\widetilde{X}_2)f_\tau(\widetilde{X}_1\widetilde{X}_2)^\dagger$ for the last interval.
Taking the product of all unitaries yields the \emph{unbalanced} schedule
\begin{align}
    U^x = \prod_{j=0}^3 U^x_j
    &=
    \widetilde{X}_1\widetilde{X}_2 f_\tau \widetilde{X}_1 f_\tau \widetilde{X}_1 \widetilde{X}_2 f_\tau \widetilde{X}_1 f_\tau ,
\label{eq:chi2-unbalanced}
\end{align}
wherein a pulse is applied to all $1$-colored qubits every $\tau$ but only every $2\tau$ to the $2$-colored qubits; see \cref{fig:schedules} (top).

A \emph{balanced} schedule results if instead we choose $i=g(c)=c+1$, as in \cref{fig:g-W2} (left, dashed arrows),
resulting in $U^x_0=f_\tau$,
$U^x_1=\widetilde{X}_2 f_\tau \widetilde{X}_2^\dagger$, 
$U^x_2=(\widetilde{X}_1\widetilde{X}_2) f_\tau (\widetilde{X}_1\widetilde{X}_2)^\dagger$, 
and $U^x_3=\widetilde{X}_1 f_\tau \widetilde{X}_1^\dagger$, so that
\begin{align}
    U^x = \prod_{j=0}^3 U^x_j
= \widetilde{X}_1 f_\tau \widetilde{X}_2 f_\tau \widetilde{X}_1 f_\tau \widetilde{X}_2 f_\tau ,
\label{eq:chi2-balanced}
\end{align}
wherein a pulse is applied alternately to $1$ and $2$-colored qubits every $2\tau$. This schedule is illustrated in \cref{fig:schedules} (middle) and may be considered preferable due to its symmetry and lower overall pulse count of four single-color pulses \textit{vs} six for the unbalanced schedule. It is an interesting question to identify the general conditions for a balanced schedule directly from the Hadamard matrix in arbitrary dimensions. 

\begin{figure}[t]
        {\includegraphics[width=0.23\columnwidth]{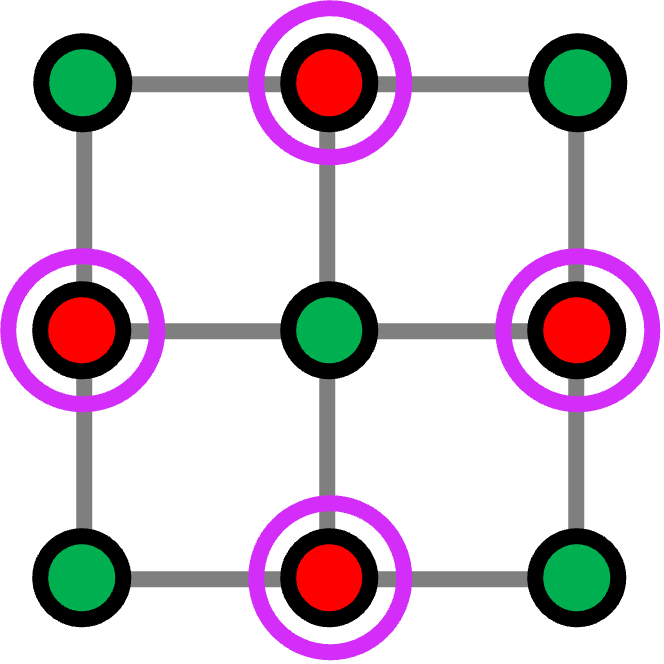}
        \label{fig:chi2-unbalanced-schedule-a}}
        {\includegraphics[width=0.23\columnwidth]{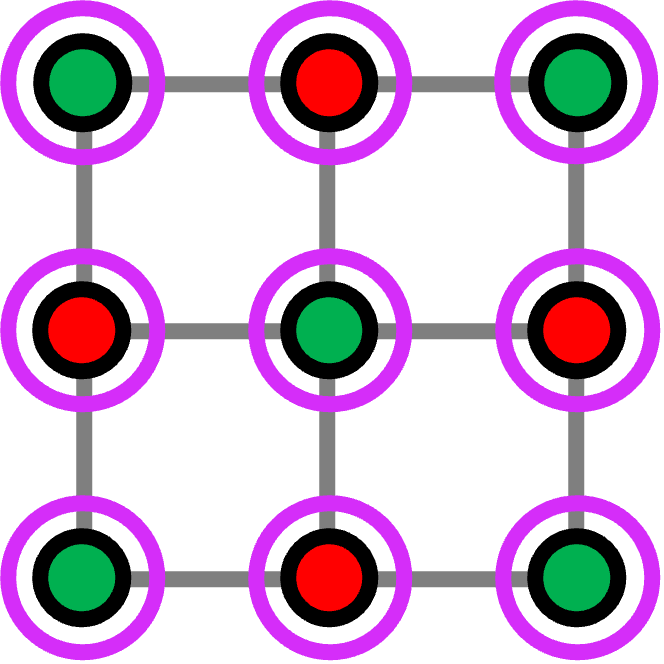}
        \label{fig:chi2-unbalanced-schedule-b}}
        {\includegraphics[width=0.23\columnwidth]{chi2-conj1}
        \label{fig:chi2-unbalanced-schedule-c}}
        {\includegraphics[width=0.23\columnwidth]{chi2-conj12}
        \label{fig:chi2-unbalanced-schedule-d}}\\
        \vspace{.15cm}
        {\includegraphics[width=0.23\columnwidth]{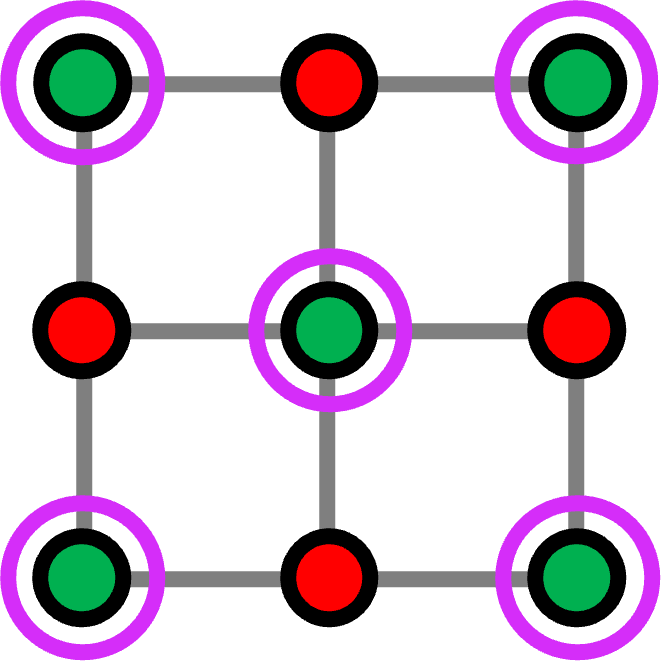}
        \label{fig:chi2-balanced-schedule-a}}
        {\includegraphics[width=0.23\columnwidth]{chi2-conj1}
        \label{fig:chi2-balanced-schedule-b}}
        {\includegraphics[width=0.23\columnwidth]{chi2-conj2}
        \label{fig:chi2-balanced-schedule-c}}
        {\includegraphics[width=0.23\columnwidth]{chi2-conj1}
        \label{fig:chi2-balanced-schedule-d}}\\
        \vspace{.15cm}
         {\includegraphics[width=0.23\columnwidth]{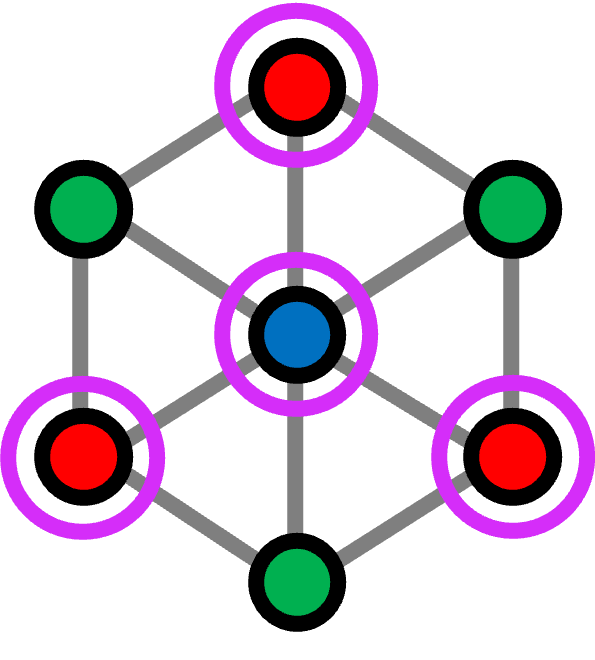}
        \label{fig:chi3-schedule-a}}
        {\includegraphics[width=0.23\columnwidth]{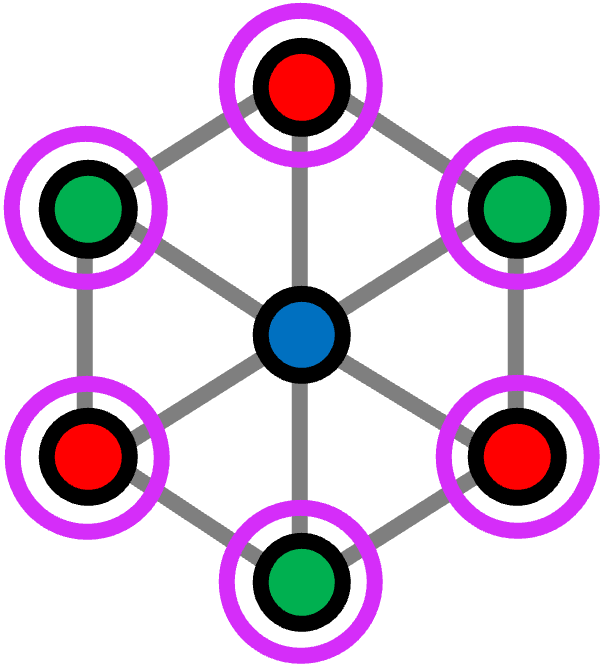}
        \label{fig:chi3-schedule-b}}
        {\includegraphics[width=0.23\columnwidth]{chi3-conj13}
        \label{fig:chi3-schedule-c}}
        {\includegraphics[width=0.23\columnwidth]{chi3-conj12}
        \label{fig:chi3-schedule-d}}
    \caption{Examples of schedules for 2-colorable (top and middle) and 3-colorable (bottom) graphs. 
    Purple circles indicate DD pulses, and time flows from left to right. Top row: unbalanced schedule for a $2$-colorable graph [\cref{eq:chi2-unbalanced}]. Middle row: balanced schedule for a $2$-colorable graph [\cref{eq:chi2-balanced}]. Bottom row: unbalanced schedule for a $3$-colorable graph [\cref{eq:chi3-schedule}].}
    \label{fig:schedules}
\end{figure}

\subsubsection{Example 2: $W_2$ for $\chi(G)=3$}
An example for which $\chi(G)=3$ is a triangular grid [\cref{fig:schedules} (bottom)].
We must now use all three rows $i=1,2,3$ of the Hadamard matrix $W_2$, and $g$ can only be a permutation. Consider the identity: $i=g(c)=c$, as in \cref{fig:g-W2} (right). We obtain $U^x_0=f_\tau$, $\overline{X}_1=\widetilde{X}_1\widetilde{X}_3$ so that $U^x_1=(\widetilde{X}_1\widetilde{X}_3) f_\tau (\widetilde{X}_1\widetilde{X}_3)^\dagger$, and similarly $U^x_2=(\widetilde{X}_2\widetilde{X}_3) f_\tau (\widetilde{X}_2\widetilde{X}_3)^\dagger$ and $U^x_3=(\widetilde{X}_1\widetilde{X}_2) f_\tau (\widetilde{X}_1\widetilde{X}_2)^\dagger$. Thus,
\begin{align}
    U^x = \prod_{j=0}^3 U^x_j = 
\widetilde{X}_1 \widetilde{X}_2 f_\tau \widetilde{X}_1 \widetilde{X}_3 f_\tau \widetilde{X}_1 \widetilde{X}_2 f_\tau \widetilde{X}_1 \widetilde{X}_3 f_\tau ,
\label{eq:chi3-schedule}
\end{align}
an unbalanced schedule using eight single-color pulses; see \cref{fig:schedules} (bottom).

\begin{figure}[b]
(a)
\includegraphics[width=0.44\columnwidth]{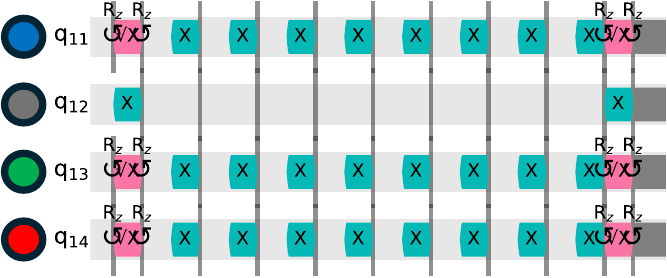}
(b)
\includegraphics[width=0.44\columnwidth]{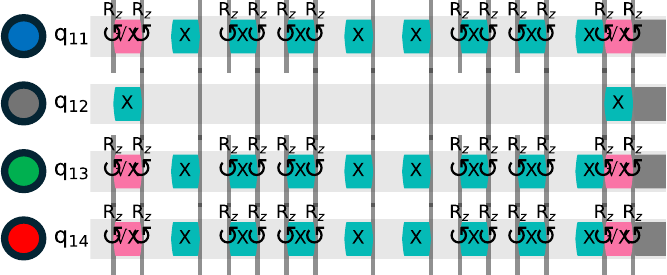} \\
(c)
\includegraphics[width=0.44\columnwidth]{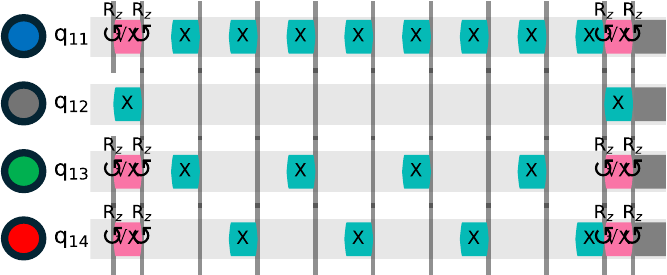}
(d)
\includegraphics[width=0.44\columnwidth]{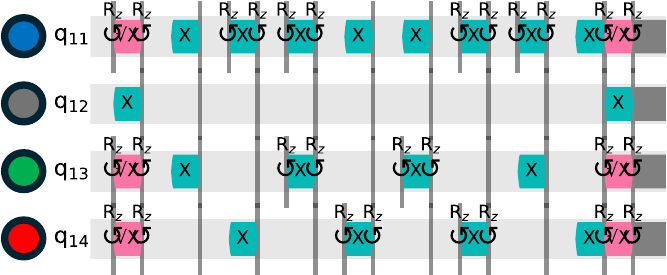} \\
(e)
\includegraphics[width=0.65\columnwidth]{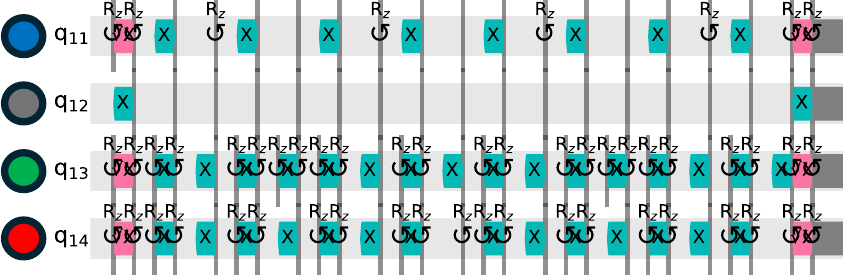}
\caption{
A 4-qubit portion of the 127-qubit $\chi=3$ experiment timeline of (a) XX, (b) UR4, (c) CHaDD, (d) CHaDD-R, and (e) multi-axis CHaDD with red, green, blue (RGB), and gray qubits prepared in the $\ket{+}$ and $\ket{1}$ states, respectively.
XX, UR4, and CHaDD are length-4 sequences repeated twice, CHaDD-R is a length-8 sequence repeated once, and multi-axis CHaDD is a length-16 sequence repeated once. The CHaDD sequences (c) and (d) are scheduled according to the color-to-row function $g(1)=2, g(2)=3, g(3)=1$, different from that in the bottom row of \cref{fig:schedules}, and use two pulses per time-slot compared to three for their achromatic counterparts (a) and (b), i.e., use $1/3$ fewer pulses, resulting in a PRR that is $2/3$ that of XX and UR4. During (e) multi-axis CHaDD, the RGB qubits see an average of $38/3$ pulses over 16 time steps, resulting in a PRR that is $38/48 \approx .79$ that of XY4.
To avoid interference effects arising from placing pulses too close together, we set $\tau=2\delta$ (where $\delta$ is the pulse width), and to ensure a correct implementation of $\bar{X}$, we use a symmetric decomposition ($\bar{X} = R_z(-\pi) X R_z(\pi)$, where $R_z(\varphi) \equiv \exp(-i\varphi \sigma^z/2)$ is a virtual-$Z$ gate~\cite{McKay:2017tv}) for UR4 and CHaDD-R~\cite{vezvaee2024virtualzgatessymmetric}.
}
\label{fig:bivalent-3coloring-timing}
\end{figure}

\subsection{A robust CHaDD variant}
DD sequences with robustness to pulse imperfections are known~\cite{Quiroz:2013fv,Genov:2017aa}. The ``universally robust'' (UR$n$) family of sequences, for even $n\geq4$, consists of $n$ $\pi$-pulses about various axes in the $xy$-plane~\cite{Genov:2017aa}. The simplest sequence, UR4, consists of the uniform interval sequence $X \bar{X} \bar{X} X$, where $X = P_x(\pi)$ and $\bar{X} = P_x(-\pi)$. Analogously, we define a robust version of the single-axis $x$-type CHaDD sequence by replacing $X$ with $\bar{X}$ where necessary to match the UR4 sequence pattern.

For example, the $\chi=3$ CHaDD sequence given in \cref{eq:chi3-schedule} consists of colors 1, 2, and 3 receiving $XXXX$, $IXIX$, and $XIXI$, respectively.
We need a minimum of four pulses on each color to form a robust sequence, so we replace these with $X \bar{X} \bar{X} X X \bar{X} \bar{X} X$, $IXI \bar{X} I \bar{X} I X$, and $XI \bar{X} I \bar{X} IXI$, which doubles the circuit depth compared to CHaDD. We denote this robust CHaDD variant ``CHaDD-R4,'' or CHaDD-R for brevity. Note that CHaDD-R promotes the underlying achromatic UR4 sequence to an efficient multi-qubit sequence in the same sense that CHaDD and multi-axis CHaDD respectively promote the XX and XY4 sequences.

\textit{Pulse repetition rate at fixed chromatic number}.---%
\cref{thm:chromatic-hadamard} and \cref{thm:chromatic-sign} characterize CHaDD in terms of the scaling of circuit depth with chromatic number. However, with an eye towards an experimental comparison between the CHaDD variants and their underlying achromatic sequences at \emph{fixed} chromatic number, it is also meaningful to quantify performance in terms of pulse repetition rate (PRR), i.e., the average number of pulses per unit time.
A lower PRR is advantageous, as it means reduced power consumption and, hence, less heat being injected into the system from microwave sources. A lower PRR also implies a reduction in coherent error accumulation. The PRR for $\chi=3$ is illustrated for a few key sequences in \cref{fig:bivalent-3coloring-timing}.
Since the achromatic sequences (XX, XY4, UR4, etc.) do not distinguish qubits by color, they target all qubits in every time-step, while CHaDD variants selectively target qubits by color. The result is a PRR that is $1/3$ lower for CHaDD and CHaDD-R than it is for XX and UR4, respectively. 
Clearly, a sequence achieving equal performance at lower PRR is preferred. We next describe experiments designed to test the performance of CHaDD variants relative to achromatic sequences according to this criterion.

\section{Experiments}
\label{sec:experiments}

We performed experiments on the ibm\_brisbane 127 qubit QPU. 
IBM's ``heavy-hex'' qubit-connectivity graphs are 2-colorable, as they lack an odd cycle. Previous work on crosstalk cancellation~\cite{evert2024syncopated, Zhou2023} did not present explicit schemes for $\chi> 2$ and instead focused on 2-colorable graphs (including Rigetti QPUs). To demonstrate CHaDD's ability to go beyond $\chi=2$, we embed within the 2-colorable heavy-hex lattice a 3-colorable system graph. This is illustrated in \cref{fig:bivalent-3coloring}, with ``system'' qubits colored red, green, and blue (RGB). The remaining ``bath'' qubits are colored gray. 

\begin{figure}[t]
\centering
\includegraphics[width=0.98\columnwidth]{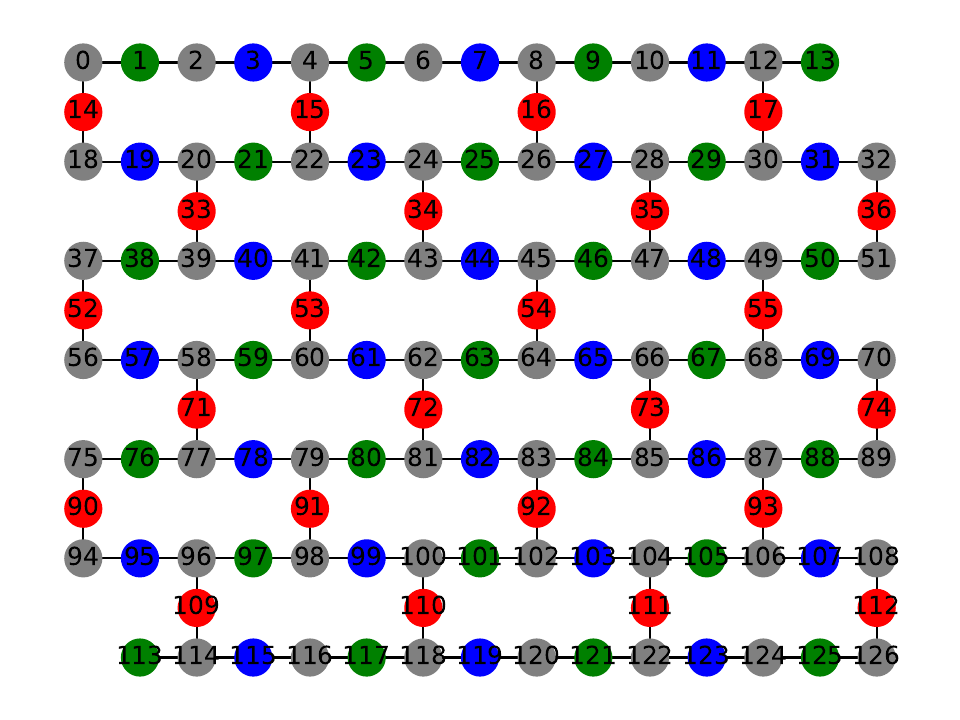}
\caption{Embedded 3-coloring of the 127 qubit heavy-hex graph consisting of system qubits (red, green, blue) and bath qubits (gray).
}
\label{fig:bivalent-3coloring}
\end{figure}

What distinguishes the system from the bath qubits is that our goal is to suppress both decoherence and crosstalk on the former but not the latter. With this in mind, we recall from the proof of 
\cref{thm:chromatic-hadamard} that the first row of the Hadamard matrix ($i=0$) is not used to schedule any pulses. 
Thus, we assign the gray qubits to this row and the red, green, and blue qubits to rows 2, 3, and 1 of $W_2$, respectively.
In this manner, decoherence of the gray qubits is not suppressed, though their crosstalk still is (since row 0 is orthogonal to all the other rows).
In addition, while the gray qubits are measured, we discard the corresponding measurement outcomes. 

In order to compare different DD sequences, we initialize all qubits in $\ket{0}$, prepare the gray qubits in each of $\ket{0}$, $\ket{+}$, and $\ket{1}$, and prepare the RGB qubits in each of the $6$ Pauli eigenstates, for a total of $18$ different preparation state settings. We perform DD on the RGB qubits, undo the state preparation, and obtain the RGB qubits' fidelity as the fraction of $\ket{0}$ states we obtain when measuring all RGB qubits in the $\sigma^z$ basis over 5,000 trials, or ``shots.'' XX, XY4, UR4, and CHaDD are length-$4$ sequences, CHaDD-R is a length-$8$ sequence, and multi-axis CHaDD is a length-$16$ sequence.
Accordingly, we repeat XX, XY4, UR4, and CHaDD in multiples of $4$ up to $80$ times, CHaDD-R in multiples of $2$ up to $40$ times, and multi-axis CHaDD up to $20$ times so that each sequence results in $21$ data points that take the same amount of time to collect. We report the average over the $18$ preparation state settings. 

\begin{figure}[t]
\centering
\includegraphics[width=0.98\columnwidth]{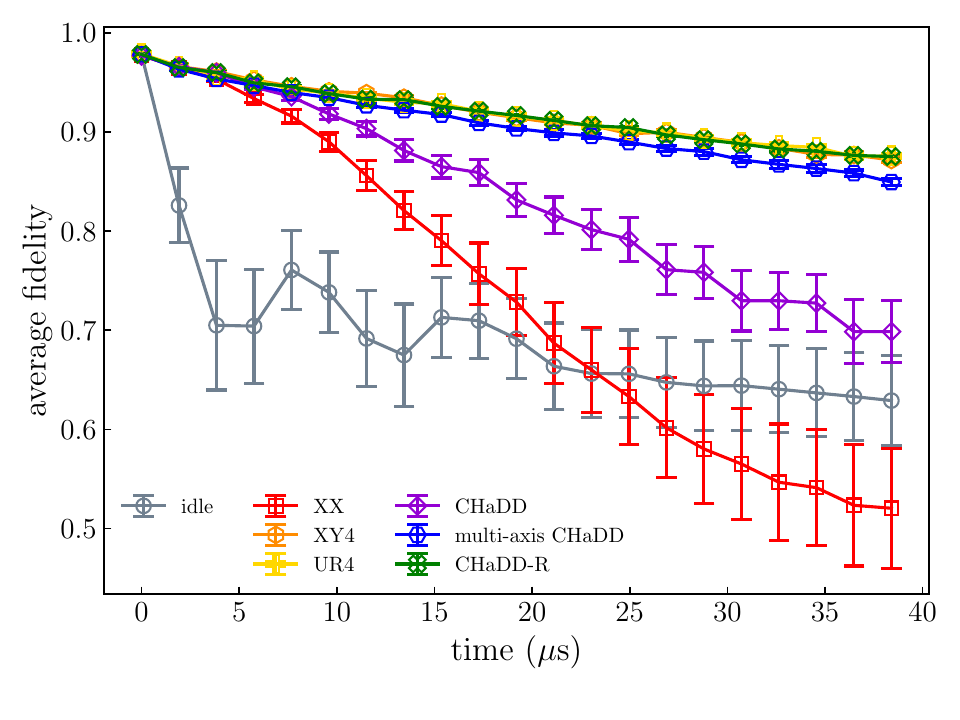}
\caption{
Comparison of average initial state preservation fidelity of RGB qubits, averaged over 18 preparation state settings, for all achromatic DD sequences and their CHaDD counterparts run on ibm\_brisbane.
Error bars here and in all other plots are 1$\sigma$.
}
\label{fig:1215-all}
\end{figure}

\begin{figure}[t]
\centering
\includegraphics[width=0.98\columnwidth]{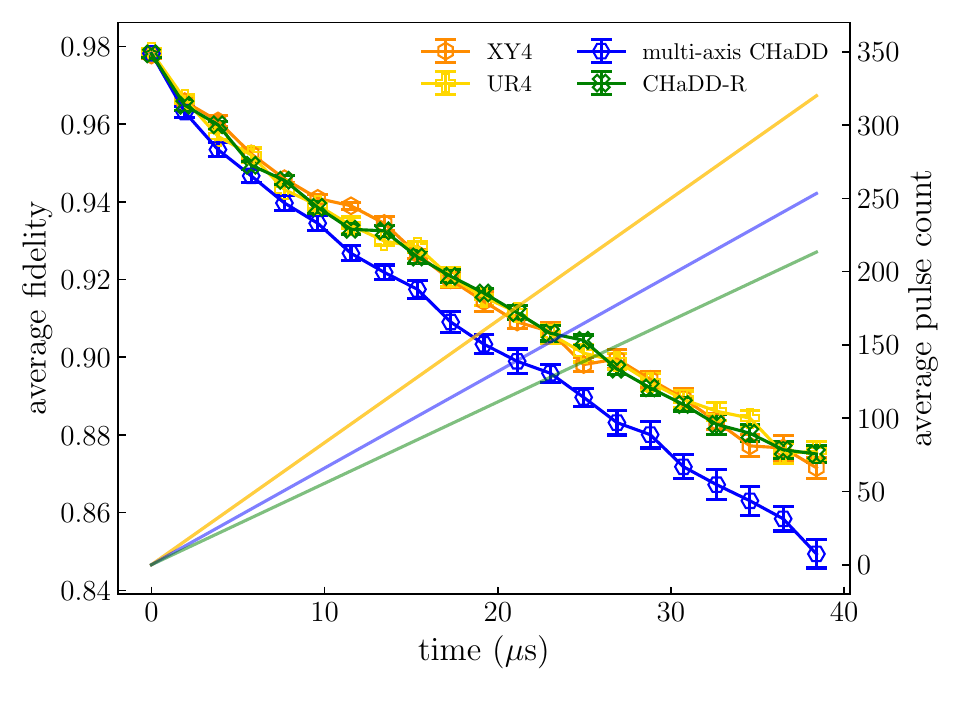}
\caption{
Average state preservation fidelity for RGB qubits (left) and pulse count per RGB qubit (right) of the most performant DD sequences on ibm\_brisbane.
Multi-axis CHaDD and CHaDD-R are the CHaDD versions of XY4 and UR4, respectively, and they exhibit similar performance, but CHaDD-R has $1/3$ fewer pulses than UR4 and XY4, while multi-axis CHaDD has $1/5$ fewer pulses than XY4  
and UR4 (see \cref{fig:bivalent-3coloring-timing}). Note that the yellow and orange lines overlap since XY4 and UR4 have the same PRR.
}
\label{fig:1215-best}
\end{figure}

\cref{fig:1215-all} collects the results from three achromatic DD sequences and three CHaDD variants, along with free evolution (idle). XY4, UR4, multi-axis CHaDD, and CHaDD-R are the top-performing sequences. Note that these sequences have much smaller $1\sigma$ error bars than the others, indicating less susceptibility to variation in initial state conditions and a better ability to preserve arbitrary initial states. 
We compare the top sequences in \cref{fig:1215-best}, where XY4, UR4, and CHaDD-R are seen to be roughly tied for top performance.

CHaDD, multi-axis CHaDD, and CHaDD-R can be thought of as the CHaDD variants of the underlying achromatic XX, XY4, and UR4 sequences, respectively. In this sense, the CHaDD variants either outperform or are on par with their underlying achromatic sequences, most notably CHaDD \textit{vs} XX in \cref{fig:1215-all}. Crucially, the CHaDD variants achieve this performance with a significantly reduced pulse count (or PRR), as also shown in \cref{fig:1215-best}. As noted above, CHaDD-R has a $1/3$ smaller PRR than UR4. As anticipated, this constitutes a much sparser and hence more power-efficient sequence for the same performance. A lower PRR also reduces coherent error accumulation, which is the likely explanation for the large performance gap between CHaDD and XX, given the latter's absence of inherent robustness against coherent errors (in contrast, XY4 and UR4 are both robust sequences~\cite{Genov:2017aa}). At the same time, CHaDD lacks a coherent error suppression mechanism, which explains why it underperforms the robust sequences.

While all the chromatic and achromatic sequences decouple all the heavy-hex RGB-gray couplings, only the chromatic sequences additionally suppress the red-green, red-blue, and green-blue next-nearest neighbor couplings; they do this with a lower PRR than the achromatic sequences, as illustrated in \cref{fig:bivalent-3coloring-timing}. The fact that the achromatic sequences also decouple the RGB qubits is a peculiarity of the embedding; crosstalk on a native 3-colorable graph would not be decoupled by synchronous, achromatic sequences, whereas with the embedding the gray qubits mediate the crosstalk between the RGB qubits, and an achromatic sequence that is applied only to the RGB qubits, as we do here, does suppress crosstalk as a result. 
To see this, consider, e.g., the interaction $\sigma^z_R  \sigma^z_{\text{gray}} + \sigma^z_{\text{gray}}  \sigma^z_B$, which anticommutes with, and hence is suppressed by, the XX-sequence pulse $(X_R \otimes I_{\text{gray}}) (I_{\text{gray}}\otimes X_B)$ applied only the red (R) and blue (B) qubits. Thus, the embedding creates an artificially favorable setup for the achromatic sequences.
To demonstrate this, we repeated our experiments while letting the latter also target the gray qubits. The result is shown in \cref{fig:RGB-vs-not}, and it confirms that the performance of the achromatic sequences is significantly reduced, with strong crosstalk oscillations clearly visible.

\cref{app:D} provides additional experimental results obtained using other IBM QPUs that are in complete qualitative agreement with the conclusions reported in this section.

\begin{figure}[t]
\centering
\includegraphics[width=0.98\columnwidth]{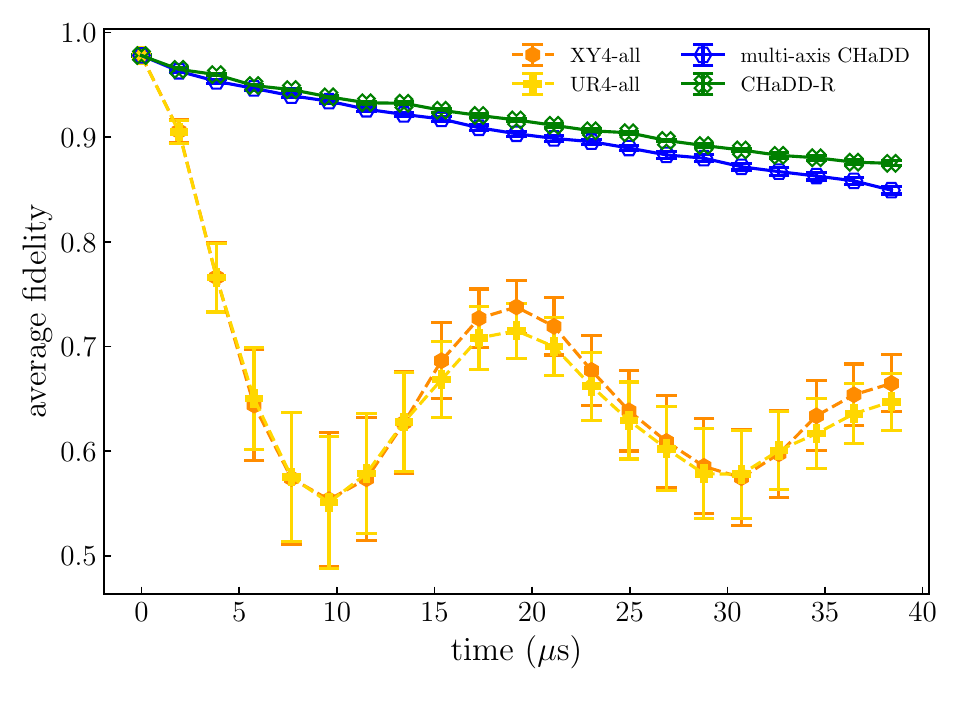}
\caption{As in \cref{fig:1215-all} but with the achromatic sequences XY4 and UR4 applied to all qubits. The multi-axis CHaDD and CHaDD-R data is the same as in \cref{fig:1215-all}. Both CHaDD variants significantly outperform their underlying achromatic sequences in this setting.
}
\label{fig:RGB-vs-not}
\end{figure}

\section{Conclusions and outlook}
\label{sec:conc}
We have proposed CHaDD along with both its multi-axis and robust variants. These are first-order DD schemes that are efficient in the chromatic number $\chi$ of the qubit connectivity graph, which, as a function of $\chi$, lead to a scaling advantage in the circuit depth required to decouple arbitrary connectivity graphs relative to previously known multi-qubit decoupling methods~\cite{Leung:01,Stollsteimer:01,Rotteler:2006aa,Wocjan:2006aa,qec-chap15,Bookatz:2016aa,Paz-Silva:2016aa,Zhou2023, niu2024multiqubit, evert2024syncopated}. This includes, as a special case, multi-qubit $ZZ$ crosstalk decoupling. At fixed $\chi$, we have demonstrated a significant reduction in the pulse repetition rate -- and hence, power consumption -- required to achieve a performance that is on par with or better than standard achromatic DD sequences.

To generalize CHaDD beyond first-order decoupling is straightforward. For example, concatenation of multi-axis CHaDD with itself is the direct multi-qubit generalization of concatenating XY4 with itself, which leads to the single-qubit concatenated DD (CDD) family~\cite{Khodjasteh:2005xu}, yielding an extra suppression order in time-dependent perturbation theory with every additional level of concatenation~\cite{Khodjasteh:2007zr,Ng:2011dn}. 

However, concatenation incurs an exponential cost in circuit depth, and it would be desirable to be able to replace the uniform pulse intervals used in the construction of single-axis CHaDD with the non-uniform intervals used in the single-qubit Uhrig DD (UDD) sequence family~\cite{Uhrig:2007qf}, in order to guarantee the same higher-order performance as UDD~\cite{UL:10} in the single-axis, multi-qubit setting. This would directly extend to multi-axis single-qubit decoupling via the quadratic DD (QDD) sequence~\cite{West:2010:130501}, which requires at most $(m+1)^2$ pulses for order-$m$ suppression in terms of the Dyson series expansion~\cite{Xia:2011uq}. Successfully combining CHaDD with QDD would generally be much more efficient than the multi-qubit, nested UDD (NUDD) sequence, which scales exponentially with the number of qubits~\cite{Wang:10}. Whether these extensions of CHaDD are possible is an interesting open problem.

It should be possible to generalize CHaDD beyond qubits to multi-level systems by replacing the Pauli matrices used in the proof of \cref{thm:chromatic-hadamard} with elements of the generalized Pauli (or Heisenberg-Weyl) group~\cite{Nielsen:2002ab,Wang:2020ab,Wocjan:2006aa}. CHaDD-R, our robust version of CHaDD, is based on the UR4 sequence; a proper generalization of UR$n$~\cite{Genov:2017aa} to CHaDD-R$n$ is an interesting topic for future research. Finally, it is another interesting open problem to generalize CHaDD from two-local and instantaneous pulses to $k$-local interactions and bounded controls for $n$ qubits, for which the current state of the art is $\mathcal{O}[n^{k-1}\log(n)]$~\cite{Bookatz:2016aa}.

\acknowledgments
This material is based upon work supported by, or in part by, the U.S. Army Research Laboratory and the U.S. Army Research Office under contract/grant number W911NF2310255.
This research was supported by the Office of the Director of National Intelligence (ODNI), Intelligence Advanced Research Projects Activity (IARPA) and the Army Research Office, under the Entangled Logical Qubits program through Cooperative Agreement Number W911NF-23-2-0216.
The authors gratefully acknowledge Victor Kasatkin and Jenia Mozgunov for helpful discussions.

\appendix

\section{Multi-axis CHaDD via concatenation}
\label{app:A}

\begin{mycorollary}[Concatenated multi-axis CHaDD]
\label{multi-axis-chromatic-hadamard}
Assume the free evolution unitary is $f_\tau = e^{-i \tau H}$, where $H$ is given by \cref{eq:error-hamiltonian}. Then a circuit depth of $4^{\left\lfloor \log_2 \chi(G) \right\rfloor + 1} \leq 4 \chi^2(G)$ is sufficient to completely cancel $H$ to first order in $\tau$ on a qubit connectivity graph $G$ with chromatic number $\chi(G)$.
\end{mycorollary}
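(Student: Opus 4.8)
The plan is to build the concatenated multi-axis sequence in exact analogy with the way single-qubit XY4 is obtained by concatenating two perpendicular CPMG-type sequences: nest the single-axis $x$-type CHaDD sequence of \cref{thm:chromatic-hadamard} inside an outer single-axis $z$-type CHaDD sequence (any two perpendicular axes serve equally well).

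Concretely, I would first invoke \cref{thm:chromatic-hadamard} once to obtain the inner block $U^x = \prod_{j=0}^{N-1} U^x_j$, of circuit depth $N = 2^{\lfloor\log_2\chi\rfloor+1}$ and duration $T_1 = N\tau$, which satisfies $U^x = \exp[-iT_1(H^x_1 + H^{xx}_2)] + \mathcal{O}(T_1^2)$. Writing $\widetilde{H} \equiv H^x_1 + H^{xx}_2$ for the residual effective generator, note that by construction $\widetilde{H}$ contains only pure-$x$ one-qubit terms and $xx$ two-qubit terms. Next I would re-run the construction of \cref{thm:chromatic-hadamard} with the axis relabelling $x \to z$ --- i.e., schedule $Z$-pulse operators $\overline{Z}_k = \bigotimes_{c\in C}\widetilde{Z}_c^{\,g(c)\cdot k}$ according to the same Hadamard matrix $W_\nu$ --- but now inserting the inner block $U^x$ in place of the elementary free evolution $f_\tau$. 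This yields the sequence $\prod_{k=0}^{N-1}\overline{Z}_k\, U^x\, \overline{Z}_k^\dagger$, made up of $N^2$ elementary intervals of length $\tau$ and total duration $T = N^2\tau$; the outer $Z$ pulses sitting at the block boundaries either act on qubits disjoint from the adjacent inner $X$ pulses or compose with them into single-qubit Pauli pulses, so the circuit depth is exactly $N^2$.

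The cancellation then follows by treating $U^x$ as a coarse-grained ``free evolution'' with effective generator $\widetilde{H}$: the $z$-type averaging supplied by \cref{thm:chromatic-hadamard} (with $x\to z$) cancels to first order every term of $\widetilde{H}$ except those of pure-$z$ or $zz$ type, and since $\widetilde{H}$ has no such terms the net first-order average Hamiltonian vanishes, giving $\prod_{k=0}^{N-1}\overline{Z}_k U^x \overline{Z}_k^\dagger = \mathbb{1} + \mathcal{O}(T^2)$, i.e., all of $H$ is cancelled to first order in $\tau$. The depth bound is immediate: $N^2 = 4^{\lfloor\log_2\chi\rfloor+1}$, and since $\lfloor\log_2\chi\rfloor\le\log_2\chi$ implies $N=2^{\lfloor\log_2\chi\rfloor+1}\le 2\chi$, we obtain $N^2\le 4\chi^2$. (For $\chi=2$ this gives $N^2=16$, consistent with the main text.)

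The hard part will be making the cancellation step rigorous: \cref{thm:chromatic-hadamard} is proved for an exact elementary evolution $f_\tau = e^{-i\tau H}$, whereas the inner block here equals $e^{-iT_1\widetilde{H}}$ only up to an $\mathcal{O}(T_1^2)$ correction. One must verify that this inner error, propagated through the $N$ outer conjugations and recombined with the outer $Z$ pulses via the BCH expansion, still contributes only at order $\mathcal{O}(T^2)$ to the overall propagator --- this is the standard bookkeeping underlying concatenated DD, but it is the one place where genuine care is needed. A secondary point to check is that conjugating the inner $X$-pulse schedule by the outer $\overline{Z}_k$ leaves a legitimate sequence of single-qubit Pauli pulses (it does, up to physically irrelevant signs), so that the hypotheses of \cref{thm:chromatic-hadamard} apply to the inner layer within each outer step.
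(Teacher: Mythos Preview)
Your proposal is correct and follows essentially the same concatenation strategy as the paper's proof, with the only difference being that you place the $x$-type CHaDD as the \emph{inner} layer and the $z$-type as the \emph{outer} layer, whereas the paper does the reverse (inner $z$, outer $x$); this is immaterial, as the paper itself remarks that swapping the axes is equivalent. Your explicit flagging of the error-propagation bookkeeping (that the inner $\mathcal{O}(T_1^2)$ correction must survive $N$ outer conjugations without spoiling the first-order cancellation) and of the pulse-composition issue at block boundaries is slightly more careful than the paper's own treatment, which simply writes the cumulative error as $\mathcal{O}(NT^2)$ without further comment.
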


\begin{proof}
It follows immediately from \cref{thm:chromatic-hadamard} (by swapping the $x$ and $z$ indices) that a single-axis $z$-type CHaDD sequence has an overall unitary of
\begin{align}
    U^z \equiv \prod_{j=0}^{N-1} U^z_j = \exp \left[ -i T ( H^z_1 + H^{zz}_2 )\right] + \mathcal{O}\left(T^2\right) .
\end{align}
If we concatenate the single-axis $x$- and $z$-type sequences at every time step, i.e., replace $f_\tau$ in $U^x_j = \overline{X}_j f_\tau \overline{X}_j^\dagger$ with the single-axis $z$-type CHaDD sequence $U^z$, we obtain $U^{xz}_j \equiv \overline{X}_j  U^z \overline{X}_j^\dagger$, so that 
\begin{align}
    \label{eq:multi-axis-chadd-unitary-unsimplified}
    U^{xz} \equiv \prod_{j=0}^{N-1} U^{xz}_j = \prod_{j=0}^{N-1} \overline{X}_j e^{-i T (H^z_1 + H^{zz}_2)} \overline{X}_j^\dagger
    + \mathcal{O}\left(N T^2\right) .
\end{align}
This is just the single-axis $x$-type CHaDD sequence $U^x$ given by \cref{eq:single-axis-chadd-unitary} applied to an effective Hamiltonian $H = H^z_1 + H^{zz}_2$ without any $H_1^x$ or $H_2^{xx}$ terms, so it follows from \cref{thm:chromatic-hadamard} that this Hamiltonian is completely canceled to first order, i.e.,
\begin{align}
\prod_{j=0}^{N-1} \overline{X}_j e^{-i T ( H^z_1 + H^{zz}_2 )} \overline{X}_j^\dagger = I + \mathcal{O}\left(N T^2\right) .
\label{eq:multi-axis-chadd-unitary}
\end{align}
Hence, the total CHaDD unitary is $U^{xz} = I + \mathcal{O}\left(N T^2\right)$.
This requires a circuit depth of $N^2 = 2^{2(\left\lfloor \log_2 \chi(G) \right\rfloor + 1)} \leq 4 \chi^2(G)$.
\end{proof}

\begin{mydefinition}
The sequence $U^{xz} = \prod_{j=0}^{N-1} U^{xz}_j$ is called concatenated multi-axis CHaDD.
\end{mydefinition}

Clearly, replacing $x$ and $z$ with other combinations of orthogonal axes is equivalent to $U^{xz}$. Thus, $U^{xz}$ can be interpreted as the efficient multi-axis, multi-qubit generalization of the XY4 sequence~\cite{Maudsley:1986ty}.

\section{Connections to projective geometry}
\label{app:B}

The properties of Schur subsets of a sign matrix presented by Leung~\cite{Leung:01} were discovered in a more general form several decades earlier in the projective geometry community, using the terminology of partial $t$-spreads in the Desargesian projective space $PG(d, q)$, where $d$ is the dimension and $q$ is the order (number of elements) of the finite field $\mathbb{F}_q$~\cite{Beutelspacher:1975aa}. In such a space, points and lines are defined in terms of vector spaces over $\mathbb{F}_q$, and the incidence structure obeys the axioms of projective geometry. Desargues' theorem guarantees that any configuration of points and lines that forms a perspective triangle has collinear points, ensuring a well-defined projective structure. A perspective triangle is a configuration where two triangles are in perspective from a point or a line

For our purposes, $PG(d, q) = \mathbb{Z}_2^{\nu} \setminus \{0\}$, $d = \nu-1$, $q=2$, $t=1$, and the size of the partial $t$-spread is the number of Schur subsets.

In this context, Ref.~\cite[Thm.~4.1]{Beutelspacher:1975aa} states that there are no more than $(2^{\nu} - 5) / 3$ Schur subsets for a Hadamard matrix $W_\nu$ with odd $\nu$. For even $\nu$ one can pick all $2^{\nu} - 1$ non-trivial rows of the Hadamard matrix (one row is the trivial row of all $+$), resulting in $(2^{\nu} - 1) / 3$ Schur subsets, so the upper bound is not needed.

Moreover, the bounds for both odd and even $\nu$ are achievable: for even $\nu$ Ref.~\cite{Beutelspacher:1975aa} states that this was known before (Results 2.1 and 2.2), and for odd $\nu$ this is shown in Theorem 4.2.

\section{Circuit depth of multi-axis CHaDD}
\label{app:C}

In the main text, we showed that 
\begin{align}
N = \min\{2^{2\lceil \frac{1}{2} \log_2(3\chi+1) \rceil} , 2^{2\lceil \frac{1}{2} [\log_2(3\chi+5) -1]\rceil+1} \}
\label{eq:N-multi}
\end{align}
for the circuit depth of multi-axis CHaDD. Here we derive the upper and lower bounds.
It follows from the definitions of $S_e$ and $S_o$ that $2^\nu \geq 3\chi + 1$ and $2^\nu \geq 3\chi + 5$ in the case that $\nu$ is even and odd, respectively.
Thus, in either case, we have that $N = 2^\nu \geq 3\chi + 1$, establishing a lower bound on the circuit depth regardless of the value of $\nu$.
Now consider that we wish to choose the smallest value of $\nu$ that satisfies the above inequalities so that we may state $2^\nu \geq 3\chi + 1 \geq 2^{\nu-1}$ and $2^\nu \geq 3\chi + 5 \geq 2^{\nu-1}$ if $\nu$ is even or odd, respectively, from which it follows that $2(3\chi + 5) \geq 2^\nu = N$.
Hence, $2(3\chi + 5) \geq N \geq 3\chi + 1$.
These results are shown in \cref{fig:N-multi}. 

\begin{figure}[t]
\centering
\includegraphics[width=0.98\columnwidth]{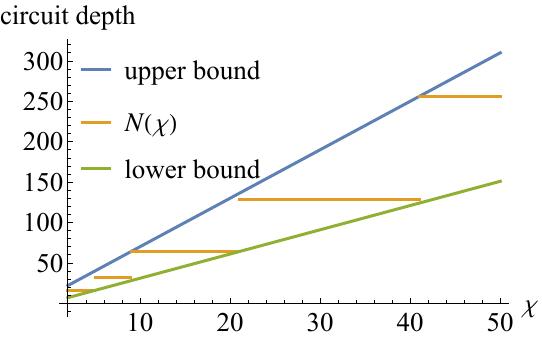}
\caption{Circuit depth $N(\chi)$ of multi-axis CHaDD [\cref{eq:N-multi}] along with the upper and lower bounds, $2(3\chi+5)$ and $3\chi+1$, respectively.}
\label{fig:N-multi}
\end{figure}

\section{Additional experimental results}
\label{app:D}

In the main text, we presented results obtained on the ibm\_brisbane QPU. In this section, to test the robustness of our conclusions, we provide results from additional IBM QPUs. These results, presented in \cref{fig:device-comparison}, are in full qualitative agreement with our conclusions. Namely, in all cases, the CHaDD variants (rightmost column in the legend) outperform their underlying achromatic sequences (middle column in the legend). 

\begin{figure*}
    \centering
    \subfigure[\ ibm\_strasbourg]{\includegraphics[width=0.44\linewidth]{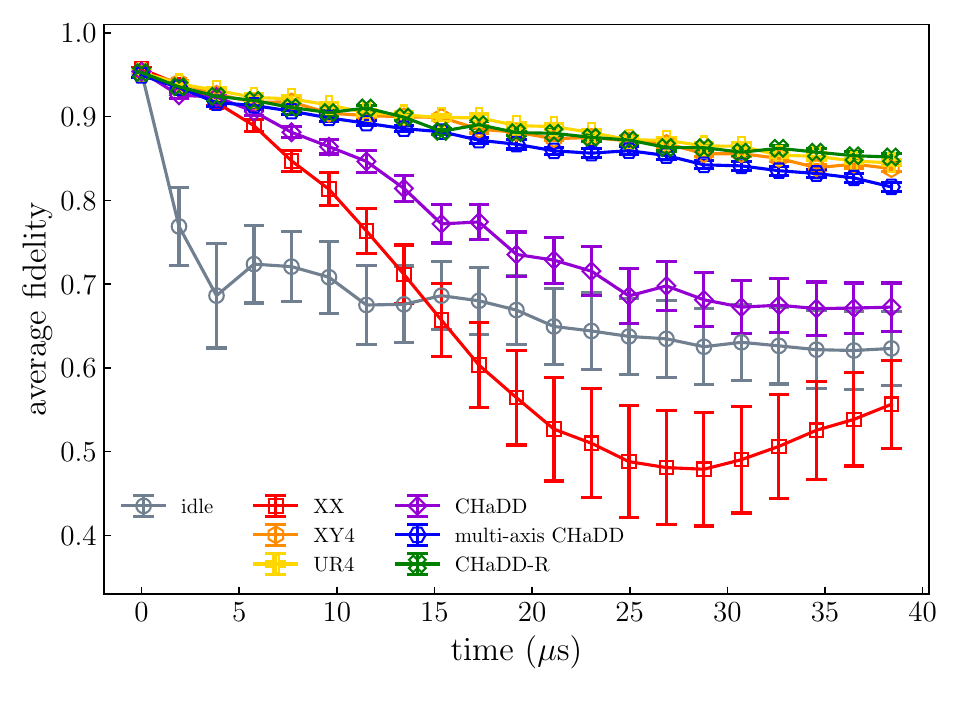}}
    \subfigure[\ ibm\_brisbane]{\includegraphics[width=0.44\linewidth]{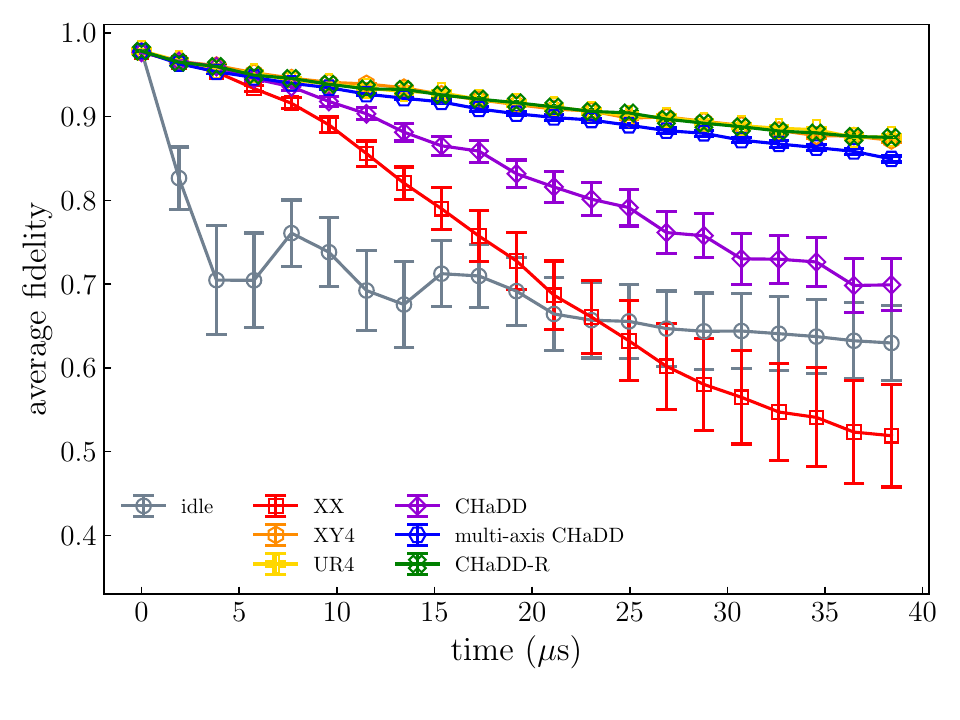}}
    \subfigure[\ ibm\_kyiv]{\includegraphics[width=0.44\linewidth]{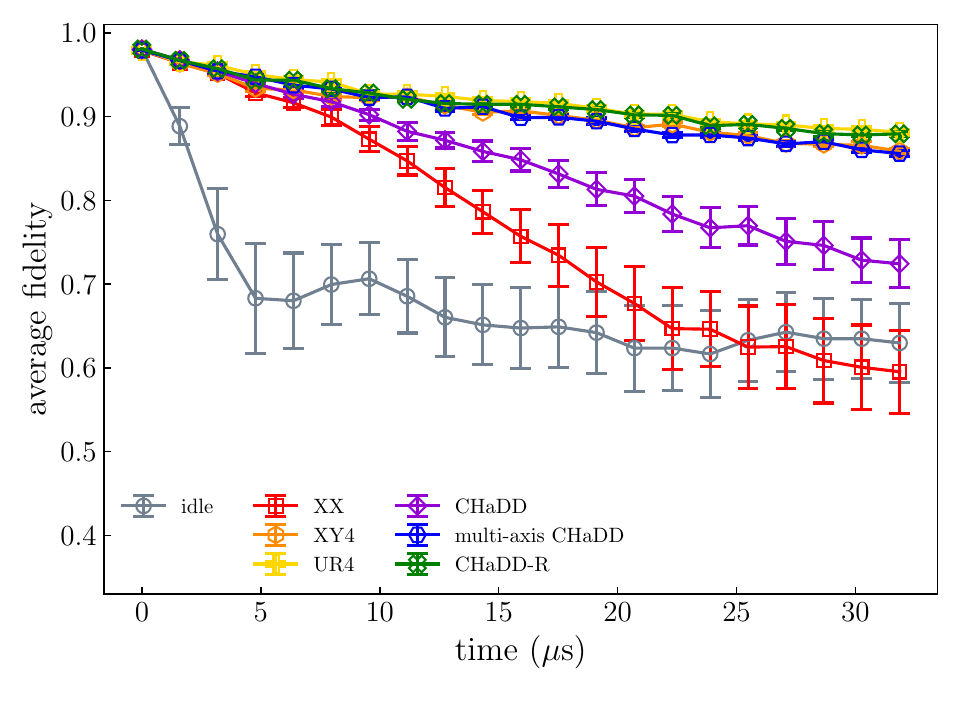}}
    \subfigure[\ ibm\_sherbrooke]{\includegraphics[width=0.44\linewidth]{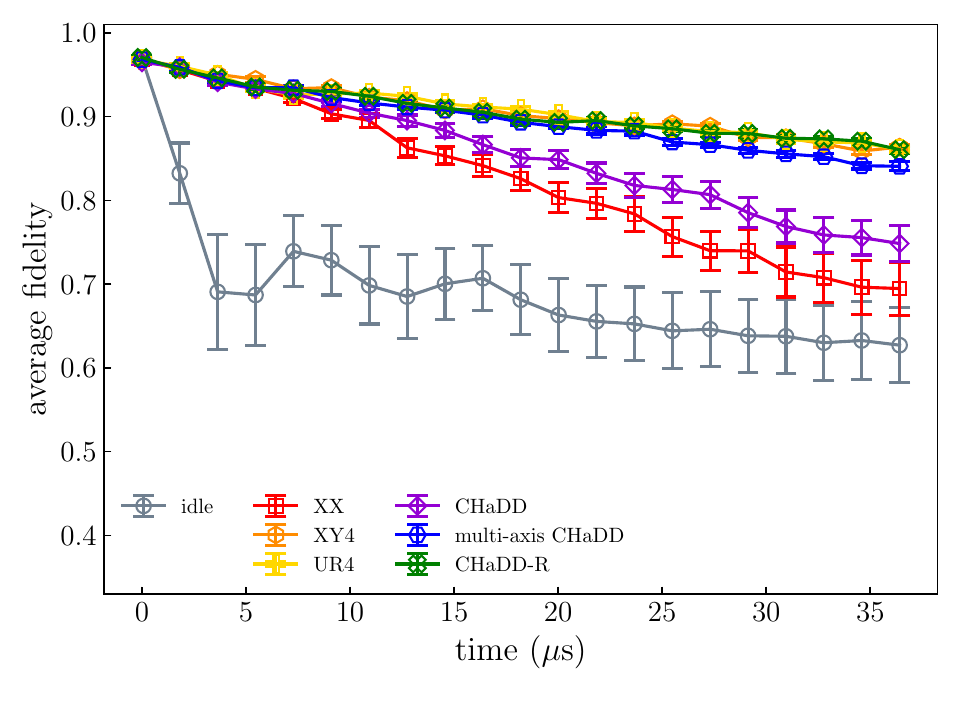}}
    \caption{Comparison across four IBM QPUs. The time axes are different since the duration of single-qubit gates varies: (a) ibm\_strasbourg: $60$ ns, (b) ibm\_brisbane: $60$ ns, (c) ibm\_kyiv: $49.78$ ns (d) ibm\_sherbrooke: $56.89$ ns. Different QPUs exhibit different quantitative behaviors, with ibm\_strasbourg displaying the overall lowest fidelities and largest crosstalk, evidenced by the oscillation in the XX sequence results.
    }
    \label{fig:device-comparison}
\end{figure*}

\clearpage
\newpage 
\bibliographystyle{apsrev4-2}

\end{document}